\let\Input=\Require
\let\Output=\Ensure
\newcounter{code}
\newenvironment{code}[1][h!tb]{%
  \let\c@algorithm\c@code
  \renewcommand{\ALG@name}{Code}
  \begin{algorithm}[#1]%
  }{\end{algorithm}
}
\title{Computing euler products and coefficients of classical
modular forms for twisted L-functions}
\author{Pascal Molin\footnote{Université Paris Cité and Sorbonne Université,
CNRS, INRIA, IMJ-PRG, F75013 Paris, France, pascal.molin@imj-prg.fr \orcidlink{0000-0003-2462-8751}}}
\begin{document}

\maketitle
\thispagestyle{empty}
\enlargethispage{2em}

\begin{abstract}
    We describe a complete algorithm to compute millions
    of coefficients of classical modular forms in a few seconds.
    We also review operations on Euler products and
    illustrate our methods with a computation of triple
    product L-function of large conductor.
\end{abstract}

\tableofcontents

\section{Introduction}

We describe an algorithm allowing practical computation of
many coefficients of classical modular forms over $\mathrm{GL}_2$.

More specifically, we are interested in computing $n$ Fourier coefficients
of a classical eigenform $f(z)=\sum a_nq^n\in S_2(N)$ for small level
$N < 1000$ and $n$ in the range $10^6$ to $10^9$.

This kind of range may seem unusual: only $O(N)$ Fourier coefficients
are needed to identify a modular form, and $O(\sqrt N)$ to compute
its L-function. However the necessity to compute
a number of Fourier coefficients much larger than the level $N$ appears
when one computes L-functions involving twists, such as Rankin-Selberg
products or twists by Hecke characters or Artin representations.

This work is actually a preparatory step for computations inspired
by Merel and Lecouturier, who are in demand of
many special values $L(f\otimes \rho,s)$ of L-functions of modular forms
twisted by Artin representations.

Meanwhile we give another application of our method with the computation
of a triple product L-functions $L(f\otimes g\otimes h,s)$,
where $f,g,h$ are three modular forms.

\subsection{Strategy}

The strategy we use is to express modular forms in terms of Eisenstein series,
\emph{à la} Borisov-Gunnels \cite{BorisovGunnels}.

The principle is that, except for a well understood obstruction,
the vector space $M_k(N,\chi)$ should be generated
by Eisenstein series and \emph{products} of Eisenstein
series of lower weights $E_\ell^{\phi,\psi}$, $\ell\leq k$.

The Fourier coefficients of Eisenstein series are completely explicit
\[
  E_k(\phi,\psi) = e_k^{\phi,\psi}
  + \sum_{n\geq 1} \bigl(\sum_{d\mid n}\phi(n/d)\psi(d)d^{k-1}\bigr) q^n
\]
with a constant term $e_k^{\phi,\psi}$ that can be expressed
as a Bernoulli coefficient, see Equation \eqref{eq:e0}.

The coefficients of a product can be obtained using fast multiplication
of power series, in $O(n\log n)$ operations for $n$ coefficients.

Now for any particular modular form $f\in M_k(N,\chi)$, assume
we have found by other means an expression
\begin{equation}\label{eq:mfbg}
    f = \sum_{i=1}^r c_i E_i E_i'
\end{equation}
where $E_i$ and $E_i'$ are Eisenstein series, and the coefficients $c_i$
belong to the coefficient ring $\Z[E_i,E_i',f]$:
we obtain a $O(n\log n)$
algorithm for the computation of $n$ coefficients $a_n(f)$.

\subsection{Contributions}

We claim no originality in the overall idea which motivated
recent works on Borisov-Gunnels decompositions
\cite{DicksonNeururer, BelabasCohen}, and which underlies
some advanced modular form functionality in Pari/GP\cite{PariGP}.
In particular this decomposition is used to compute Fourier
expansions at cusps different from $\infty$.
For the cusp $\infty$ a proof of concept by D. Loeffler is
also mentionned in \cite{DicksonNeururer} and a general Sage package can
be found at \cite{NeururerGithub}.
\medskip

However the timings obtained are orders of magnitude away from our work.

\subsubsection{Eisenstein series}

In fact in order to obtain a really efficient method we were led to
revisit Fourier expansion of Eisenstein series, to prevent this
seamingly easy step to bottleneck the whole process.

We obtain an impressive speedup over existing methods. 

\begin{thm}\label{thm:eiscost}
    Let $\phi,\psi$ be Dirichlet characters of order $o_\phi,o_\psi$
    modulo $N_\phi,N_\psi$,
    with values in a ring $R[\phi,\psi]$, and
    $E_k^{\phi,\psi}=\sum_n a_n q^n$ the Eisenstein series of weight $k$
    associated to $\phi,\psi$.
    For $n>1$ we can compute the first $n$ coefficients
    \[
        a_n
        = \sum_{d\mid n}\phi(n/d)\psi(d)d^{k-1} \in R[\phi,\psi]
    \]
    of index $n\geq1$
    with \textbf{less than} $n+o_\phi+o_\psi$
    additions and multiplications in $R[\phi,\psi]$,
    and $O(n+N_\psi+N_\psi)$ operations on indices.

    Furthermore if $R[\phi,\psi]$ is a finite field the computation
    can be done with $O(n+N_\phi+N_\psi)$ space.
\end{thm}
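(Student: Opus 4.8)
The plan rests on the fact that $n\mapsto a_n$ is multiplicative: it is the Dirichlet convolution of the completely multiplicative functions $n\mapsto\phi(n)$ and $n\mapsto\psi(n)n^{k-1}$, so its Dirichlet series is $L(\phi,s)L(\psi,s-k+1)$ and its Euler factor at $p$ has degree two. Concretely, setting $u=\phi(p)$ and $v=\psi(p)p^{k-1}$, one has $a_1=1$, $a_p=u+v$, and $a_{p^e}=a_p a_{p^{e-1}}-uv\,a_{p^{e-2}}$ for $e\ge2$ (equivalently $a_{p^e}=\sum_{j=0}^{e}u^{e-j}v^{j}$). It therefore suffices to produce $a_p$ for every prime $p\le n$, to run the recurrence on the $O(\sqrt n)$ prime powers $p^e\le n$ with $e\ge2$, and to reconstruct all remaining $a_m$ by multiplicativity.

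I would organise the computation around a linear sieve: iterating over $m=2,\dots,n$, the sieve records in $O(n)$ index operations the smallest prime factor of $m$ together with the splitting $m=p^{e}m'$ with $p\nmid m'$. For $m$ prime we evaluate $a_p$; for $m=p^{e}$ a higher prime power we apply the recurrence to the already stored $a_{p^{e-1}},a_{p^{e-2}}$; for every other $m$ we set $a_m=a_{p^{e}}\cdot a_{m'}$, a single multiplication in $R[\phi,\psi]$ between previously computed values. To keep $a_p$ cheap one precomputes once the $o_\phi$ powers of $\zeta_{o_\phi}$ and the $o_\psi$ powers of $\zeta_{o_\psi}$ (this is the $o_\phi+o_\psi$ ring operations of the statement) and, in $O(N_\phi+N_\psi)$ index operations, the value tables of $\phi$ and $\psi$ indexed by residues via Conrey labels / discrete logarithms; then $\phi(p)$ and $\psi(p)$ are lookups, and if $R[\phi,\psi]$ is represented so that multiplication by a root of unity is a cyclic shift of coordinates, forming $a_p=\phi(p)+\psi(p)\,p^{k-1}$ reduces to obtaining the scalar $p^{k-1}$ (binary exponentiation, or a companion linear sieve on $(k-1)$st powers) and one addition, and each prime-power step likewise costs only $O(1)$ ring operations beyond shifts.

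Totalling up is the delicate point. Among the $n-1$ indices $2\le m\le n$ there are $\pi(n)$ primes, $O(\sqrt n)$ prime powers of exponent $\ge2$, and $n-1-\pi(n)$ composite indices; the composites incur at most one multiplication each, hence at most $n-1-\pi(n)$ multiplications in all, the primes contribute one addition each, i.e.\ $\pi(n)$ additions, which are covered by the multiplications the sieve avoids on the prime indices, and what remains is the $O(\sqrt n)$ work on prime powers plus the one-time root-of-unity precomputation — which one must arrange (reusing the running values $a_{p^{e-1}}$ and the running power $p^{e(k-1)}$, and observing that $\phi(p)=0$ or $\psi(p)=0$ for $p\mid N_\phi N_\psi$ only removes work) so as to fit inside $n+o_\phi+o_\psi$. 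For the space claim, over a finite field $R[\phi,\psi]$ each coefficient and each sieve cell is $O(1)$ machine words, so the coefficient array and the smallest-prime-factor array need $O(n)$ words and the character tables $O(N_\phi+N_\psi)$.

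The main obstacle I foresee is precisely this constant-factor accounting: the multiplicative structure, the degree-two Euler factor and the linear sieve are standard, but one must be scrupulous that every character evaluation is realized as an index lookup or a coordinate shift rather than a genuine multiplication in $R[\phi,\psi]$, that the prime-power recurrence is written so as not to smuggle in extra multiplications, and that the scalars $p^{k-1}$ are charged correctly (to index operations where $R$ allows, or to $O(\log k)$ ring multiplications per prime) so that the headline bound genuinely holds.
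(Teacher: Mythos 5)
Your overall architecture is the paper's: multiplicativity, a degree-two Euler factor at each prime, a linear sieve producing for each composite a coprime splitting off the smallest prime power, value tables for $\phi$ and $\psi$ costing $o_\phi+o_\psi$ ring multiplications to fill, and $O(n+N_\phi+N_\psi)$ space over a finite field. The one place you part ways — and it is exactly the place your last paragraph flags as worrisome — is the prime-power recurrence, and it does make your operation count miss the stated bound.

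You propagate along prime powers with the degree-two recurrence $a_{p^e}=a_p\,a_{p^{e-1}}-uv\,a_{p^{e-2}}$. This costs two multiplications per step $e\ge2$, and in addition forces you to form $uv=\phi(p)\psi(p)p^{k-1}$ for every prime $p\le\sqrt n$. That is $\pi(\sqrt n)+\sum_{e\ge2}\pi(n^{1/e})\asymp \sqrt n/\log n$ multiplications on top of the paper's count of exactly $n-1+o_\phi+o_\psi$, so for $n$ of any interesting size the total exceeds $n+o_\phi+o_\psi$; the ``cyclic shift'' dispensation does not rescue this, since both $a_p\,a_{p^{e-1}}$ and $uv\,a_{p^{e-2}}$ are products of genuine elements of $R[\phi,\psi]$, not multiplications by a monomial in the roots of unity. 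The paper instead uses the \emph{first-order inhomogeneous} recurrence
\[
  a_{p^e}\;=\;\bigl(p^{k-1}\psi(p)\bigr)\,a_{p^{e-1}}\;+\;\phi(p^e),
\]
in which $\phi(p^e)=\phi(p^e\bmod N_\phi)$ is a pure table lookup (free in $R$, paid for only in the index budget). Each prime power of any exponent then costs exactly one multiplication and one addition, with no separate $uv$ and no extra multiplication for $e\ge2$, and the total lands at $n-1+o_\phi+o_\psi$ multiplications and strictly fewer than $n$ additions. You in fact wrote down the closed form $a_{p^e}=\sum_{j=0}^{e}u^{e-j}v^{j}$ from which this recurrence drops out immediately by factoring $v$ out of the $j\ge1$ terms; had you taken that one more algebraic step instead of reverting to the characteristic polynomial, your accounting would have closed. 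As stated, the proof has a genuine (if small and fixable) gap in precisely the constant-factor bookkeeping the theorem is about.
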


We will even prove that in the very useful case of weight $k=1$
we need strictly less than $n$ multiplications in $R$.

\subsubsection{Modular forms}

The last ingredient for coefficients of modular forms is a fast
implementation of product of power series. We
leverage on Flint's highly optimized Fast Fourier Transform product
over small primes \cite{Flint} for this last step.
We also take advantage of the fact that it
suffices to do all computations modulo an FFT prime to obtain the
coefficients of an eigenform, thanks to Hasse bound on coefficients.

\begin{thm}\label{thm:mfcost}
    Let $f\in S_k(N)$ be a modular form, and assume $f$ is given
    with an explicit decomposition \eqref{eq:mfbg}.
    We can compute the first $n$ Fourier coefficients
    $a_n(f)\in\Z[f]$ with $O(krn\log n)$ binary operations
    and $O(kn)$ memory.
\end{thm}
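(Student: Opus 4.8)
The plan is to evaluate the right‑hand side of \eqref{eq:mfbg} as a $q$‑expansion truncated to $n$ terms, performing every arithmetic operation modulo a fixed modulus that is simultaneously large enough to determine the exact coefficients of $f$ and a product of word‑size FFT primes, and to do so without ever storing more than one summand $c_iE_iE_i'$ at a time.

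First I would fix the modulus. By the Hasse--Deligne bound $|a_m(f)|\le\sigma_0(m)\,m^{(k-1)/2}$, valid in every complex embedding, each of the first $n$ coefficients of the eigenform $f$ has $O(k\log n)$ bits; since $\log n$ does not exceed the machine word size in the feasible range $n\le10^9$, each $a_m(f)$ fits in $O(k)$ words. I would therefore choose $s=O(k)$ distinct FFT primes $p_1,\dots,p_s$ — word‑size primes, each congruent to $1$ modulo a high power of $2$ (so that a length‑$n$ number‑theoretic transform exists) and modulo the order $o$ of every character occurring in \eqref{eq:mfbg}, and coprime to $N$ and to the denominators of the $c_i$ — with $\prod_j p_j>4\,\sigma_0(n)\,n^{(k-1)/2}$. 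The congruence conditions guarantee that $\Z/p_j\Z$ already contains the roots of unity assumed by the characters, so each Eisenstein series reduces to a genuine power series over $\Z/p_j\Z$; and since reduction modulo a chosen prime above $p_j$ is a ring homomorphism $\Z[\mu_o,c_1,\dots,c_r]\to\Z/p_j\Z$, the Borisov--Gunnels identity \eqref{eq:mfbg} descends to a congruence of $q$‑expansions modulo each $p_j$. It is therefore legitimate to carry out the whole computation modulo each $p_j$ separately and recover $a_m(f)$ at the end by CRT and balanced lifting. (For an eigenform with non‑rational coefficients one works either with all complex conjugates of $f$ at once or modulo a prime of $\Z[f]$, which changes only the constants.)

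Granting this, the complexity claim is obtained by adding up the costs. Fix a prime $p_j$ and an accumulator array of length $n$ over $\Z/p_j\Z$. For $i=1,\dots,r$: by Theorem \ref{thm:eiscost} the first $n$ coefficients of $E_i$ and of $E_i'$ over $\Z/p_j\Z$ are each computed with $O(n)$ ring operations — the constant terms being the explicit Bernoulli values \eqref{eq:e0} — and $O(n+N)$ index operations; the truncated product $E_iE_i'$ is formed with Flint's word‑prime FFT multiplication in $O(n\log n)$ ring operations; finally $E_iE_i'$ is scaled by $c_i$ and added into the accumulator in $O(n)$ ring operations, and the two length‑$n$ Eisenstein buffers are released. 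This is $O(rn\log n)$ ring operations for the prime $p_j$; over the $s=O(k)$ primes, together with the final $O(kn)$‑word CRT reconstruction, the total is $O(krn\log n)$ operations in a fixed‑size residue ring, i.e.\ $O(krn\log n)$ binary operations (the index operations contributing only $O(krn)$ since $N=O(1)$ here). At any moment we store the $\le s$ result arrays of length $n$, the running accumulator, and $O(1)$ FFT scratch arrays of length $n$, so the working memory is $O(sn)=O(kn)$ words.

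The only genuinely non‑routine point is the reduction step: one must know that \eqref{eq:mfbg} is an honest identity of $q$‑expansions over $\Z[\mu_o,c_1,\dots,c_r]$, so that it is preserved under reduction; that the Hasse--Deligne bound pins down the size of the \emph{output} coefficients $a_m(f)$ independently of the much larger intermediate quantities (the divisor sums $\sigma_{k-1}(m)$ occurring inside the Eisenstein series are never formed as integers); and that the $p_j$ can be taken split in the cyclotomic field generated by the character values, so that no proper extension of $\Z/p_j\Z$ is needed. Once these three facts are in hand, the cost and memory bounds follow by the bookkeeping above, using Theorem \ref{thm:eiscost} for the Eisenstein part and the quasi‑linear cost of FFT multiplication for the products.
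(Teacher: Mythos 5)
Your proof takes the same route as the paper: reduce modulo word-size FFT primes chosen to split in the relevant cyclotomic ring, compute Eisenstein $q$-expansions by Theorem~\ref{thm:eiscost}, multiply by small-prime FFT, accumulate, and lift by the Hasse bound (Lemma~\ref{lem:hasse}) with CRT when one prime does not suffice. In fact you are somewhat more explicit than the paper, which states the lifting condition only for a single prime in Corollary~\ref{cor:hasse} (for $S_2(N)$) and disposes of the general case with a one-sentence appeal to CRT, whereas you count the $s=O(k)$ primes needed and check that the total memory $O(sn)$ stays within the $O(kn)$ budget. One optimization in the paper that you omit, and which you do not need for the complexity bound but which matters in practice, is that Algorithm~\ref{algo:mfcoefs} retains only the coefficients of prime index after each product, saving a $\log n$ factor of storage in the linear-algebra/accumulation phase.
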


\subsubsection{General operations on L series}

The key improvements mentionned above are obtained from a
global reflection on the algorithmics of Dirichlet series.

In particular, and this is important for our application on
triple product L-function, we consider the problem of computing
L-series of tensor products and symmetric powers.

We describe an algorithm of optimal complexity, where
the precise definitions will be given in Section \ref{sec:4products}.
\begin{thm}\label{thm:prodsymcost}
    Let $f,g$ be two arithmetic objects whose L-functions
    $L(f,s)=\sum_{n\geq 1}a_n(f) n^{-s}$
    and $L(g,s)=\sum_{n\geq 1}a_n(g)n^{-s}$ admit a Euler product.
    Then from the knowledge of the first $n$ coefficients
    of $f$ and $g$ we can compute the first $n$ coefficients
    of $L(f\oplus g,s)$ and $L(f \otimes g,s)$ with $O(n)$ operations.
    From the first $nk$ coefficients of $f$ we can compute the first
    $n$ coefficients of $L(\Sym^k f,s)$ with $O(nk)$ operations.
\end{thm}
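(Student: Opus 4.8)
The plan is to use the Euler product to split each computation into independent local computations, one per prime, and to reassemble the global coefficients by a linear sieve. Write the target series as $\sum_{m\ge 1}c_m m^{-s}$. Since $f$ and $g$ have Euler products, so do $f\oplus g$, $f\otimes g$ and $\Sym^k f$, so each sequence $(c_m)$ is multiplicative and is determined by its values $c_{p^e}$ at prime powers; once these are known, all $c_m$ with $m\le n$ are recovered with $n-1$ ring multiplications using a table of smallest prime factors for $[1,n]$ (built with an auxiliary $O(n)$ integer sieve): if $p$ is the least prime factor of $m$ and $p^{e}$ exactly divides $m$, then $c_m=c_{p^{e}}\,c_{m/p^{e}}$, with $m/p^{e}<m$ already computed. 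So it suffices to compute, for every prime power $p^{j}\le n$, the coefficient of $T^{j}$ in the local Euler factor of the target at $p$, from the truncated local factors $L_p(f,T)=\sum_{j\le J_p}a_{p^{j}}(f)\,T^{j}$ and $L_p(g,T)$, where $J_p=\lfloor\log_p n\rfloor$.

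For $f\oplus g$ the local factor is $L_p(f,T)\,L_p(g,T)$, so $c_{p^{j}}=\sum_{i=0}^{j}a_{p^{i}}(f)\,a_{p^{j-i}}(g)$: a Cauchy product of two sequences of length $J_p+1$, at cost $O(J_p^{2})$. This already beats the naive Dirichlet convolution: primes $p>\sqrt n$ have $J_p=1$ and cost $O(1)$ each, hence $O(n/\log n)$ in total, while the primes $p\le\sqrt n$ contribute $O(\sqrt n\log^{2}n)$; the total is $O(n)$, using only the first $n$ coefficients of $f$ and $g$.

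For $f\otimes g$ the point is the following identity, which is the definition adopted in Section~\ref{sec:4products}: writing the local factor of $f$ at $p$ as $\prod_{i}(1-\alpha_{i}T)^{-1}$ and of $g$ as $\prod_{\ell}(1-\beta_{\ell}T)^{-1}$, that of $f\otimes g$ is $\prod_{i,\ell}(1-\alpha_{i}\beta_{\ell}T)^{-1}$, whose logarithm is $\sum_{m\ge1}\frac1m\big(\sum_i\alpha_i^{m}\big)\big(\sum_\ell\beta_\ell^{m}\big)T^{m}$; equivalently, the $m$-th power sum $\pi_m$ of the Satake parameters of $f\otimes g$ at $p$ is the product of those of $f$ and $g$. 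Because the degree-$m$ coefficient of the logarithm or exponential of a power series depends only on the input coefficients of degree $\le m$, the recipe at $p$ --- take logarithms of $L_p(f,T)$ and $L_p(g,T)$ truncated to degree $J_p$, clear the factor $1/m$, multiply the resulting power-sum sequences coefficientwise, reinsert $1/m$, exponentiate --- returns the coefficients $c_{p^{j}}$ for $j\le J_p$ from the first $J_p+1$ coefficients of each input, at cost $O(J_p^{2})$. The same telescoping over primes gives $O(n)$ operations from the first $n$ coefficients of $f$ and $g$. For $\Sym^k f$ the argument has the same shape, the only change being that the $m$-th power sum of the Satake parameters of $\Sym^k f$ at $p$ equals $h_k$ applied to the $m$-th powers $\alpha_1^{m},\alpha_2^{m},\dots$ of the Satake parameters of $f$ at $p$, which is a fixed universal polynomial in $\pi_m,\pi_{2m},\dots,\pi_{km}$ of $f$; hence the first $J$ coefficients of $L_p(\Sym^k f,T)$ are determined by the coefficients $a_p(f),\dots,a_{p^{kJ}}(f)$ of $L_p(f,T)$, which produces the factor $k$ in the number of coefficients of $f$ read and in the per-prime cost, and summing as before (with fast power-series arithmetic) yields $O(nk)$.

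The step needing care --- and the main obstacle --- is performing the logarithm/exponential manipulations (equivalently, passing between the complete homogeneous basis $a_{p^{j}}=h_j$ and the power-sum basis through Newton's identities) over the coefficient ring $\Z[f,g]$: this introduces denominators, up to $J_p!$ for $\otimes$ and up to $k!$ for $\Sym^k$. One resolves this either by observing that the outputs $c_{p^{j}}$ are themselves integral --- being symmetric functions of algebraic integers --- so that the computation can be run over $\Q$ (or modulo a large enough FFT prime, as in Theorem~\ref{thm:mfcost}), or by replacing the $\log/\exp$ step with a division-free recursion in the $h$-basis. Finally one checks that bad primes need no special treatment: the algorithm only manipulates the formal truncated series $\sum_{j\le J_p}a_{p^{j}}T^{j}$ and never refers to the degree of the Euler factor, so it is uniform in $p$.
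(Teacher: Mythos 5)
Your proof follows essentially the same route as the paper: split by prime using the Euler product, work in the Newton power-sum basis where the tensor product becomes a componentwise product and $N_\ell(\Sym^k P)=h_k(P^{\circ \ell})$, and reassemble multiplicatively with an $O(n)$ sieve. The paper encapsulates this as Proposition \ref{prop:poloperations} and Algorithms \ref{algo:poltensor}--\ref{algo:polsympow} fed into Algorithm \ref{algo:euler-precomp}, and you've independently re-derived all of the key relations, including the integrality caveat about dividing by factorials (the paper's hypothesis ``$\ell!$ invertible'' in Theorem \ref{thm:poloperations}).

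Two small points worth tightening. First, your sieve: a bare smallest-prime-factor table tells you $p=\mathrm{spf}(m)$, but the reconstruction $c_m=c_{p^e}c_{m/p^e}$ also needs $e=v_p(m)$; extracting $e$ by repeated division over all $m\le n$ costs $\sum_{m\le n}v_{\mathrm{spf}(m)}(m)=O(n\log\log n)$ index operations, not $O(n)$. This is easy to repair (store $p^{v_p(m)}$ directly during a linear sieve, or use the paper's linked-list rough-coprime sieve, Algorithm \ref{algo:coprime}), but as written it's a gap in the ``$O(n)$ operations on indices'' accounting. Second, for $\Sym^k$: computing $N_\ell(\Sym^k P)=h_k(P^{\circ\ell})$ from the power sums $N_{\ell},\dots,N_{k\ell}$ of $P$ via Newton's identities costs $O(k^2)$ per $\ell$ if done by the plain recurrence, which gives $O(k^2 n)$ overall, not the $O(kn)$ you assert. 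Your appeal to ``fast power-series arithmetic'' (Brent-style $O(k\log k)$ exponentials per prime power) can indeed bring it down to $O(kn)$ when $k\le n$, but this needs to be made explicit. Note for context that the paper itself only proves $O(k^2 n)$ for the symmetric-power case (Proposition \ref{prop:costsymprodpow} and Theorem \ref{thm:anoperations}); the $O(nk)$ figure in Theorem \ref{thm:prodsymcost} is not actually what the paper establishes, so you are matching the stated bound but doing so by an optimization the paper itself does not carry out.
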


\subsection{Implementation}

We wrote a small program demonstrating the computation of coefficients
for a choice of modular forms: we obtain millions of coefficients
in seconds, and for lengths $n\leq 10^9$ we need only a few minutes
(see Section \ref{sec:timings}).

Our program is available as an example
file \texttt{example/mfcoefs} in the repository of Flint \cite{Flint},
so that the timings can be easily reproduced.

For larger lengths the method is still of interest, but we start to run
into memory issues and need other FFT code
(e.g. for $n=10^{10}$, storing $2^{34}$ coefficients
as 64 bits integers occupies 128GB and we should adopt more segmented
approaches).

\subsection{Other methods}

Default modular form packages in Magma \cite{Magma}, Pari/GP \cite{PariGP}
and Sage \cite{sagemath} have
not been designed to compute a large number of coefficients.
We make a small performance benchmark in Section\ref{sec:compare}
to emphasize this fact.

Yet there are other ideas to compute either a single coefficient
$a_p(f)$ for large $p$ or a bunch of these:
\begin{itemize}
    \item Mascot proposed another quasilinear algorithm in
        \cite[3.1]{Mascot2013}, based on modular equations and
        Newton iteration.
    \item The Galois representation method \cite{CouveignesEdixhoven, Mascot2013}
        gives an algorithm to compute any coefficient $a_p(f)$ in time
        polynomial in $\log(p)$.
    \item When the modular form is associated to a rational
        elliptic curve we have efficient point counting algorithms
        (a very special case of the above).
\end{itemize}
To our knowledge, with the notable exception of weight $k=1$ \cite{Lauder},
there is no efficient and publicly available
implementation of the generic Galois representation method.
Moreover it seems that our method
remains competitive for moderate inputs, as we demonstrate in
Section \ref{sec:compare} in the case of a small elliptic curve L-series.

One can also make use of other decompositions of modular forms into
easily computable functions, like theta series and eta products.
It seems the Eisenstein series expression is the most
general.


\section{Operations on Dirichlet series}

\subsection{Dirichlet coefficients}

Let $f$ be an arithmetic object admitting a L-function,
we denote by $a_n(f)$ the coefficients of its Dirichlet series
\[
  L(f,s) = \sum_n a_n(f)n^{-s}.
\]

When this $L$-function admits a Euler product we have
\[
  L(f,s) = \prod_p F_p(f,p^{-s})^{-1}
\]
for Euler polynomials of the form
\[
  F_p(f,T) = \prod_{i=1}^d (1-\alpha_{p,i}(f)T)
\]
and whose roots satisfy Ramanujan bound
$\abs{\alpha_{p,i}(f)}\leq p^{w/2}$ for a fixed motivic
weight $w$, with equality when $p\nmid N$.
In particular almost all factors (those for $p\nmid N$) have common
degree $d$ which is the degree of the L-function.

Finally, when $f$ corresponds to an automorphic theta function or
modular form (for degree $d=1$ and $d=2$), we have a Fourier expansion
\[
  f(z) = \sum_n a_n(f) q^n
\]
and the L-function is obtained as the Mellin transform of $f$.

The modular form series expansion is the same as the Dirichlet series.

\subsection{A ring with four products}\label{sec:4products}

Let $f,g$ be two such arithmetic objects, we can consider four products
on the associated Dirichlet series, which we denote according to the
Artin formalism of L-functions.

\begin{enumerate}
  \item 
      the product of powers series $a_n(fg) = \sum_{i+j=n}a_i(f)a_j(g)$
  \item 
      the Dirichlet convolution $a_n(f\oplus g) = \sum_{ij=n}a_i(f)a_j(g)$
  \item 
      if $f$ and $g$ admit a Euler product,
      the tensor product $a_n(f\otimes g)$ which is defined on Euler factors by
      \[
          F_p(f\otimes g,T) = \prod_{i,j}(1-\alpha_{p,i}(f)\alpha_{p,j}(g) T)
      \]
  \item
      the pointwise product of coefficients $a_n(f\odot g)=a_n(f)a_n(g)$.
\end{enumerate}

\begin{rem}
When $f$ or $g$ have an Euler expansion, one of which has degree $d=1$
the last two coincide
\[
    a_n(f\otimes g) = a_n(f)a_n(g).
\]
The case of larger degree corresponds
to arithmetic objects whose Dirichlet coefficients are in fact indexed by
higher rank modules such as integral ideals, and whose arithmetic
is adequately captured at the level of degree $d$ Euler polynomials.

We will detail below (Section \ref{sec:rankinselberg}) the precise
relation between the pointwise product and the tensor product. We
can ignore the former for arithmetic applications.
\end{rem}

\begin{rem}
  We define these products at the level of Dirichlet series.
  We \emph{do not} pretend that these operations produce
  actual L-functions: we need to adjust the
  Euler factors in a subtle arithmetic way at primes dividing
  the conductor in order
  to obtain a nice symmetric functional equation.
\end{rem}

To each of these products corresponds a natural
pointwise product in the appropriate setting:

\begin{enumerate}
\item
  $(fg)(z) = f(z)g(z)$ for the evaluation of modular forms
\item
  $L(f\oplus g,s) = L(f,s)L(g,s)$ for the evaluation of L functions.
  When the L functions admit a Euler product
  the multiplication is also valid at the level of Euler
  factors $F_p(f\oplus g,T) = F_p(f,T)F_p(g,T)$.
\item
    $\set{\alpha_{p,i}(f\otimes g)} = \set{\alpha_{p,i}(f)}\times\set{\alpha_{p,i}(g)}$
  for the roots of Euler factors
  (this is in fact the tensor product of representations
  which corresponds to multiplying the eigenvalues).
\end{enumerate}

Our point in the remaining of this section is the following:
Theorem \ref{thm:prodsymcost} means that \emph{the only expensive
operation in this ring of Dirichlet series is the usual product
of series $fg$, whose complexity is $O(n\log n)$.}

Another way of viewing it is that we always implement these operations in
the efficient pointwise setting, but 
we have $O(n)$ conversion algorithms
between Euler factor data and Dirichlet coefficients data
(this will be Theorem \ref{thm:eulercomplexity}),
while conversions between coefficients $(a_i)_{i\leq n}$ and evaluated
values $f(q_i)_{i=1..n}$ is a $O(n\log n)$ process in the most favorable
setting of structured $q_i$ being roots of unity
(the Fast Fourier Transform algorithm).

\subsection{Operations on Euler factors}

\subsubsection{Products and powers}

\begin{defn}\label{def:poloperations}
    Let $P(T)=\prod_{i=1}^{d} (1-\alpha_i T)$ and
        $Q(T)=\prod_{j=1}^{e} (1-\beta_j T)$ be two polynomials.
    We consider
    \begin{enumerate}
    \item the usual product
    $(P\oplus Q)(T) = P(T)Q(T) = \prod_{i,j}(1-\alpha_iT)(1-\beta_j T)$
    \item the tensor product
    $(P\otimes Q)(T) = \prod_{i,j}(1-\alpha_i\beta_j T)$
    \item the $k$-th root power
    $P^{\circ k}(T) = \prod_i(1-\alpha_i^k T)$
    \item the $k$-th symmetric power
    $(\Sym^k P)(T) = \prod_{1\leq i_1\leq\dots \leq i_k\leq d}
        (1-\alpha_{i_1}\dots \alpha_{i_k} T)$.
    \end{enumerate}
\end{defn}

%
The rest of this section is devoted to prove the following.
\begin{thm}\label{thm:poloperations} 
    With the notations of Definition \ref{def:poloperations},
    for any precision $\ell>0$, if $\ell!$ can be inverted in $R$
    we can compute the first $\ell$ coefficients
    of the polynomials $(P\oplus Q)(T)$ and
    $(P\otimes Q)(T)$ with $O(\ell^2)$ operations in $R$,
    and those of $\Sym^k P(T)$ with $O(k\ell(k+\ell+d))$ operations in $R$.
\end{thm}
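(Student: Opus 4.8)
The plan is to pass from the multiplicative data (the roots $\alpha_i$, $\beta_j$) to additive data via power sums, exploit the fact that all four operations act simply on power sums, and then convert back using Newton's identities. Concretely, for a polynomial $P(T)=\prod_{i=1}^d(1-\alpha_i T)$ write $p_m = \sum_i \alpha_i^m$ for its $m$-th power sum and $e_m$ for $(-1)^m$ times its $m$-th coefficient (so $P(T)=\sum_m (-1)^m e_m T^m$). Newton's identities give, for each $m\geq 1$,
\[
    m\,e_m = \sum_{j=1}^{m} (-1)^{j-1} e_{m-j}\, p_j,
\]
which lets us compute $e_1,\dots,e_\ell$ from $p_1,\dots,p_\ell$ (and conversely $p_1,\dots,p_\ell$ from $e_1,\dots,e_\ell$) in $O(\ell^2)$ ring operations, using that $\ell!$ — hence each $m\le\ell$ — is invertible in $R$. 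This is the engine behind all three claims. The one genuinely new point is that we never need to know the roots individually, nor their number beyond the stated precision: all we ever manipulate are the first $\ell$ power sums.

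For $P\oplus Q$, there is of course nothing to do beyond the definition: it is the literal product $P(T)Q(T)$, whose first $\ell$ coefficients are a convolution of the first $\ell$ coefficients of $P$ and $Q$, costing $O(\ell^2)$. I include it mainly for symmetry with the other cases. For $P\otimes Q$, the roots are the products $\alpha_i\beta_j$, so the power sums multiply: writing $p_m^{P\otimes Q} = \sum_{i,j}(\alpha_i\beta_j)^m = p_m^P\, p_m^Q$. So the algorithm is: compute $p_1^P,\dots,p_\ell^P$ from the coefficients of $P$ by Newton (inverse direction), likewise for $Q$, multiply them pairwise to get $p_1^{P\otimes Q},\dots,p_\ell^{P\otimes Q}$ in $O(\ell)$ further operations, then run Newton forward to recover the first $\ell$ coefficients of $P\otimes Q$. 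Each step is $O(\ell^2)$, and the root-power case $P^{\circ k}$ is the same with $p_m \mapsto p_{km}$, which costs nothing extra once the power sums are available (though here one needs $p_1^P,\dots,p_{k\ell}^P$, i.e. the $O(k\ell)$ relevant coefficients of $P$ as input).

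For $\Sym^k P$ the roots are the products $\alpha_{i_1}\cdots\alpha_{i_k}$ over non-decreasing multi-indices, and the link to power sums is the classical generating-function identity
\[
    \sum_{r\geq 0} h_r(\alpha)\, t^r = \prod_i \frac{1}{1-\alpha_i t}
    = \exp\!\Bigl(\sum_{m\geq 1} \frac{p_m}{m}\, t^m\Bigr),
\]
where $h_r$ is the complete homogeneous symmetric function; the coefficients of $\Sym^k P$ are, up to sign, the elementary symmetric functions of the $\binom{d+k-1}{k}$ products above, and those are in turn Schur-type polynomials in the $h_r$ of the original $\alpha_i$ — equivalently one can get the power sums of $\Sym^k P$ directly, since $\sum_{1\le i_1\le\dots\le i_k} (\alpha_{i_1}\cdots\alpha_{i_k})^m = h_k(\alpha_1^m,\dots,\alpha_d^m)$, the degree-$k$ complete homogeneous function evaluated at the $m$-th powers. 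So the workflow is: from the first $O(k+\ell)$ power sums $p_j^P$ (equivalently that many coefficients of $P$, plus the degree $d$), for each $m=1,\dots,\ell$ form the $k$-th complete homogeneous function of $(\alpha_1^m,\dots,\alpha_d^m)$; but this last quantity is itself a polynomial in $p_m^P, p_{2m}^P,\dots,p_{km}^P$ computable by Newton-type recursion in $O(k)$ operations each (using $1/(d!)$, but $d\le\ell$ is covered by invertibility of $\ell!$ — and if $d>\ell$ one only needs the $h_j$ up to $j=k$, still fine); this yields $p_1^{\Sym^k P},\dots,p_\ell^{\Sym^k P}$ in $O(k\ell)$ operations total, after which a forward Newton pass of cost $O(\ell^2)$, or more carefully $O(\ell(k+\ell))$ once we account for the $O(\binom{d+k-1}{k})$-sized object only through its first $\ell$ coefficients, recovers $\Sym^k P$. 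Bookkeeping the various ranges — we need $p_j^P$ for $j\le k\ell$ in the worst accounting, and the elementary-to-power-sum conversions at degree up to $d$ — gives the stated bound $O(k\ell(k+\ell+d))$.

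The main obstacle is not any single identity — Newton's identities and the $h$-$p$ generating function are standard — but making the complexity bookkeeping in the symmetric-power case honest: one must be careful that (i) the power sums of $\Sym^k P$ can be obtained from $O(k)$ power sums of $P$ each, so that $\ell$ of them cost $O(k\ell)$ rather than $O(k^2\ell)$; (ii) the invertibility hypothesis "$\ell!$ invertible in $R$" genuinely suffices — we only ever divide by integers $\le\max(\ell,k,d)$ appearing in Newton's identities, and one should check whether the intended reading is $\ell\ge\max(k,d)$ or whether an extra hypothesis is implicit; and (iii) the final conversion only touches the first $\ell$ coefficients of an object with up to $\binom{d+k-1}{k}$ roots, which is exactly why truncating at precision $\ell$ keeps everything polynomial in $\ell$ rather than in the number of roots. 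I would state (i)–(iii) as small lemmas and let the cost estimate fall out.
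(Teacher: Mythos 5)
Your approach is essentially the paper's: pass to Newton power sums, use the pointwise multiplicativity $N_m(P\otimes Q)=N_m(P)\,N_m(Q)$ and the identity $N_m(\Sym^k P)=h_k(P^{\circ m})$ (your observation that $N_m(\Sym^k P)$ is $h_k$ evaluated at the $m$-th powers of the roots), then convert back by Newton's identities — the paper packages exactly this as Proposition~\ref{prop:poloperations}, Algorithms~\ref{algo:poltensor} and~\ref{algo:polsympow}, and Lemma~\ref{lem:costconvert}.

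One local slip in your cost bookkeeping, point~(i): computing $h_k(\alpha_1^m,\dots,\alpha_d^m)$ from $p_m,p_{2m},\dots,p_{km}$ via the Newton recursion $j\,h_j=\sum_{i=1}^{j}p_{im}h_{j-i}$ costs $O(k^2)$, not $O(k)$, because the recursion forces you to build up $h_1,\dots,h_{k-1}$ along the way; "uses $O(k)$ inputs" is not the same as "takes $O(k)$ operations". The paper's proof of Proposition~\ref{prop:costsymprodpow} charges this step honestly as $O(\ell k^2)$. You do not need the tighter $O(k\ell)$ you were aiming for — $O(k^2\ell)$ is absorbed by the target $O(k\ell(k+\ell+d))$, so the theorem is fine, but your stated per-sum cost is wrong and the worry about avoiding $O(k^2\ell)$ was misplaced. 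Your point~(ii) is a fair observation: the $h_k$ step also divides by integers up to $k$, so strictly speaking one wants $\max(k,\ell)!$ invertible; the paper's hypothesis should be read with $\ell\ge k$ in mind, which is the relevant regime, or with that mild extra assumption made explicit.
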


\begin{rem}
We do not use the expression
    $(P\otimes Q)(T) = \Res_U(P(U),U^eQ(T/U))$
as a resultant.
In fact the method we will describe is simpler
to implement, it makes it easier to
compute only part of the product (a key feature for L-functions),
and it is less sensitive to the nature of the ring
of coefficients (e.g. inexact).
\end{rem}

\subsubsection{Symmetric polynomials}

Let $\alpha_1,\dots \alpha_d$ be $n$ variables, we consider
the degree $d$ polynomial
\[
    P(T) = \prod_{i=1}^d (1-\alpha_iT)
\]
whose coefficients are given by the elementary
symmetric polynomials
\begin{equation}
    \sigma_k(P) =
    \sum_{1\leq i_1<i_2<\dots <i_k \leq d}
    \alpha_{i_1}\dots \alpha_{i_k}
\end{equation}

We also consider the complete homogeneous sums of degree $k$
\begin{equation}
    h_k(P) =
    \sum_{1\leq i_1\leq i_2\leq \dots \leq i_k\leq d}
    \alpha_{i_1}\dots \alpha_{i_k}
\end{equation}
and the power sums of Newton
\begin{equation}
    N_k(P) = x_1^k+\dots+x_d^k.
\end{equation}

By an easy counting argument we have
\begin{lem}
    Assume $\abs{\alpha_i}=p^{w/2}$ for $i=1\dots d$, then
    we have
    $\abs{\sigma_k(P)}\leq \binom{d}{k} p^{\frac{kw}2}$,
    $\abs{h_k(P)}\leq \binom{d+k}{k} p^{\frac{kw}2}$ and
    $\abs{N_k(P)}\leq d p^{\frac{kw}2}$.
\end{lem}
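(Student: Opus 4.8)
The plan is to derive each of the three bounds directly from the explicit formulas for $\sigma_k$, $h_k$, and $N_k$, using the triangle inequality together with the hypothesis $\abs{\alpha_i}=p^{w/2}$.

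For the Newton power sums this is immediate: $N_k(P)=\sum_{i=1}^d\alpha_i^k$ is a sum of $d$ terms, each of absolute value $\abs{\alpha_i}^k=p^{kw/2}$, so $\abs{N_k(P)}\le d\,p^{kw/2}$ by the triangle inequality.

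For $\sigma_k$ and $h_k$ the argument is the same in spirit, but I must count the number of monomials. In $\sigma_k(P)=\sum_{1\le i_1<\dots<i_k\le d}\alpha_{i_1}\cdots\alpha_{i_k}$ each monomial is a product of $k$ of the $\alpha_i$, hence has absolute value exactly $p^{kw/2}$; the number of such monomials is the number of $k$-subsets of $\{1,\dots,d\}$, namely $\binom{d}{k}$, giving $\abs{\sigma_k(P)}\le\binom{d}{k}p^{kw/2}$. For $h_k(P)=\sum_{1\le i_1\le\dots\le i_k\le d}\alpha_{i_1}\cdots\alpha_{i_k}$ each monomial again has absolute value $p^{kw/2}$, and the number of monomials is the number of multisets of size $k$ from a $d$-element set, which is $\binom{d+k-1}{k}$; the stated bound $\binom{d+k}{k}$ is weaker and follows a fortiori. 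In each case one simply applies the triangle inequality: $\abs{\sum_\mu \mu}\le\sum_\mu\abs{\mu}=(\#\text{monomials})\cdot p^{kw/2}$.

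There is no real obstacle here — this is the "easy counting argument" the authors advertise. The only points requiring a word of care are that $\abs{\cdot}$ is genuinely an absolute value (multiplicative and satisfying the triangle inequality), so that $\abs{\alpha_{i_1}\cdots\alpha_{i_k}}=\prod\abs{\alpha_{i_j}}=p^{kw/2}$ and the triangle inequality applies term by term; and, in the $h_k$ case, recognizing that the exact count $\binom{d+k-1}{k}$ may be replaced by the cleaner $\binom{d+k}{k}$ without loss. Thus the proof is three one-line estimates, one for each family, combined from the defining formulas.
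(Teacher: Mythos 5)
Your proof is correct and is exactly the "easy counting argument" the paper alludes to without spelling out: apply the triangle inequality to the defining sums, noting that every monomial has absolute value $p^{kw/2}$, and count the monomials. Your observation that the exact count for $h_k$ is $\binom{d+k-1}{k}$ (multisets of size $k$ from $d$ elements), so the paper's $\binom{d+k}{k}$ is a slightly weaker bound that follows a fortiori, is accurate and in fact sharpens the statement.
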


In particular in the context of L-functions we can
expect $\abs{N_k(P)}\leq \abs{\sigma_k(P)}$
for $k<d$.

\subsubsection{Conversions}

The links between these three families are convieniently obtained
via their generating series. We have
\begin{equation}\label{eq:serhk}
    H(P,T)
    = \sum_{k\geq 0} h_k(P) T^k
    = \frac1{P(T)}
\end{equation}
and
\begin{equation}\label{eq:serNk}
    N(P,T)
    = \sum_{k\geq 0} N_{k+1}(P) T^k
    = -\frac{P'(T)}{P(T)}.
\end{equation}
Also, with the convention that $\sigma_k(P)=0$ for $k>d$ we write
\begin{equation}\label{eq:sersk}
    P(T) = \sum_{k\geq 0} (-1)^k\sigma_k(P)T^k
\end{equation}

Formula \eqref{eq:serhk} is of special interest for Euler
polynomials: the complete homogeneous sums are
the Dirichlet coefficients of primepower index
\[
    \frac{1}{F_p(f,T)}
    = \sum_{e\geq 0} a_{p^e}(f) T^e
    = \sum_{e\geq 0} h_e(F_p(f,T)) T^e.
\]
They obey a finite order recurrence if and only if
the local Euler factor is a rational fraction.

The simplest (and most efficient method for large degree) is
to use formal series computations to convert from one
family to the other.

We compute Newton sums from coefficients
$\sigma_k$ of $h_k$ using an inversion
\begin{equation}\label{eq:HtoN}
    N(P,T) = -\frac{P'(T)}{P(T)} = \frac{H'(P,T)}{H(P,T)}
\end{equation}
and we integrate formally in the other direction
\begin{equation}\label{eq:NtoH}
P(T)^{\pm 1}
= \exp(\pm \log P(T))
= \exp\bigl(\mp\sum_{k\geq 1}N_k(P)\frac{T^k}{k}\bigr).
\end{equation}
This requires the indices $k$ to be invertible in the ring,
which will be the case in our applications (otherwise inversion is
still possible with more coefficients).

We also derive from \eqref{eq:NtoH} a closed formula
for the $k$-th coefficient $h_k(P)$ in terms of Newton power sums.
\begin{prop}\label{prop:hk}
  Let $\Pi_k = \set{ k=\sum_{i=1}^k i m_i , m_i\geq 0 }$ be the set
  of 
  partitions of $k$, then 
  \[
    h_k(P) = \sum_{\Pi_k} \prod_{i=1}^k \frac{ N_i(P)^{m_i} }{ i^{m_i} m_i! }
  \]
\end{prop}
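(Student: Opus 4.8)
The plan is to expand the exponential generating series identity \eqref{eq:NtoH} directly. Taking the $-1$ branch (i.e. $P(T)^{-1} = H(P,T)$), we have
\[
H(P,T) = \exp\Bigl(\sum_{k\geq 1} N_k(P)\frac{T^k}{k}\Bigr).
\]
First I would write the right-hand side as an infinite product $\prod_{i\geq 1}\exp\bigl(N_i(P)T^i/i\bigr)$ and expand each factor as its own power series $\sum_{m_i\geq 0} \bigl(N_i(P)/i\bigr)^{m_i} T^{i m_i}/m_i!$. Multiplying these out, the coefficient of $T^k$ in $H(P,T)$ collects exactly one term for each choice of nonnegative integers $(m_i)_{i\geq 1}$ with $\sum_i i m_i = k$ — that is, for each partition of $k$ written in multiplicity notation — and that term is $\prod_{i} N_i(P)^{m_i}/(i^{m_i} m_i!)$. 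Since only $i\leq k$ can have $m_i>0$, the product may be truncated at $i=k$, giving the stated formula.

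The two small points to check are bookkeeping rather than substance. One is that interchanging the product over $i$ with the extraction of the $T^k$ coefficient is legitimate: this is immediate in the ring of formal power series since, modulo $T^{k+1}$, only the finitely many factors with $i\leq k$ contribute and each contributes only finitely many terms. The other is the identification of the index set: a tuple $(m_i)_{i\geq 1}$ of nonnegative integers with $\sum i m_i = k$ is precisely an element of $\Pi_k$ as defined, with $m_i$ the number of parts equal to $i$, so the sum over $\Pi_k$ in the statement matches the sum arising from the product expansion. I would also note in passing that the identity requires $1,2,\dots,k$ to be invertible in $R$ (already assumed in \eqref{eq:NtoH}), which is what makes the factors $i^{m_i} m_i!$ meaningful.

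There is really no main obstacle here: the result is a formal manipulation of the classical exp-log relation between power sums and complete homogeneous symmetric functions, and the only thing to get right is the combinatorial translation between "monomials in the expanded product" and "partitions of $k$." If one wanted an alternative route avoiding series, one could instead prove the formula by induction on $k$ using the Newton–Girard recurrence $k\,h_k = \sum_{i=1}^{k} N_i\, h_{k-i}$ derived from \eqref{eq:HtoN}, but the generating-function argument above is shorter and is the natural companion to \eqref{eq:NtoH}.
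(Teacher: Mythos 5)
Your proof is correct and follows essentially the same route as the paper: expand the exponential in \eqref{eq:NtoH} as an infinite product of exponentials $\prod_{i\geq 1}\exp(N_i(P)T^i/i)$, expand each factor, and read off the coefficient of $T^k$ as a sum over partitions. The extra remarks on justifying the formal-series manipulations and identifying the index set with $\Pi_k$ are fine bookkeeping but do not change the substance.
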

\begin{proof}
We expand
  \[
    \begin{aligned}
        \sum_k h_k(P)T^k
        &=\exp(-\log P(T))
         =\exp\bigl(\sum_{i\geq 1}N_i(P)\frac{T^i}i\bigr)\\
        &=\prod_{i\geq 1}\exp(\frac{N_i(P)}{i}T^i)
         =\prod_{i\geq 1}\sum_{m_i\geq 0} \frac{N_i(P)^{m_i}}{i^{m_i}m_i!}T^{im_i}\\
        &=\sum_{k\geq 0}\bigl(\sum_{\Pi_k} \prod_i\frac{ N_i(P)^{m_i} }{ i^{m_i} m_i!}\bigr)T^k.
    \end{aligned}
  \]
\end{proof}

\begin{ex}
    The first sums are
    \[
      \begin{aligned}
        h_1 &= \sigma_1 = N_1  \\
        h_2 &= \sigma_1^2-\sigma_2 = \frac{N_1^2+N_2}2 \\
        h_3 &= \sigma_1^3-2\sigma_1\sigma_2+\sigma_3
            = \frac{N_1^3+3N_1N_2+2N_3}6 \\
        h_4 &= \sigma_1^4-3\sigma_1^2\sigma_2
               +2\sigma_1\sigma_3+\sigma_2^2-\sigma_4 \\
            &= \frac{N_1^4+6N_1^2N_2+8N_1N_3+3N_2^2+6N_4}{24}
      \end{aligned}
    \]
\end{ex}

We can sum up this section with the following remarks
\begin{itemize}
    \item the coefficients $h_k(P)$ and $N_k(P)$ are solutions
        of the same recurrence equation given by $P(T)$,
        and differ only by initial conditions;
    \item if we know the powers sums $N_k(P)$, the coefficients
        $\sigma_k(P)$ and $h_k(P)$ can be computed
        with almost the same recurrence
        \begin{equation}\label{eq:Nrec}
            \begin{cases}
            k h_k = \sum_{j=1}^k N_jh_{k-j} \\
            k \sigma_k = -\sum_{j=1}^k N_j\sigma_{k-j};
            \end{cases}
        \end{equation}
    \item this recurrence allows to get recursively
        each term $\sigma_k$ or $h_k$ with $k-1$ additions and $k-1$
        multiplications, plus one scalar division by $k$;
    \item conversely $N_k(P)$ can be obtained
        from $\sigma_k(P)$ or $h_k(P)$ at a similar cost, the division
        being replaced by a multiplication.
    \item the simpler recurrence
        \begin{equation}\label{eq:hrec}
            \forall e\geq 1,
            \sum_{i=0}^e (-1)^i a_{p^{e-i}}(f) \sigma_i(F_p(f)) = 0
        \end{equation}
        allows to get $h_k$ recursively
        from coefficients $\sigma_k$ with $k-1$ multiplications
        and $k-1$ multiplications.
\end{itemize}

The point of this section is that representing a polynomial
by its Newton sums $N_k(P)$ instead of its coefficients $\sigma_k(P)$
does not hurt from a computational point of vue.

\begin{lem}\label{lem:costconvert}
    Assume $P(T)$ is a polynomial of degree $d$. For $n\geq 1$, assume
    we know the first $\min(n,d)$ coefficients of one of the series
    $P(T)$, $H(P,T)$, of $N(P,T)$, then we can compute the first
    $n$ coefficients
    of another one (or the same) using $O(n\min(n,d))$ operations.
\end{lem}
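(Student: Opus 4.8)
The plan is to handle the conversions in pairs and observe that each direction costs $O(n\min(n,d))$, where the $\min$ comes from the fact that the recurrences defining these families are driven by the polynomial $P(T)$, which has only $d+1$ nonzero coefficients. First I would reduce everything to $P(T)$, $H(P,T)$ and $N(P,T)$ as power series truncated at order $n$. I would treat three basic moves: (a) from the $\sigma_k$ to the $h_k$ (i.e.\ series inversion $H=1/P$); (b) from the $\sigma_k$ or $h_k$ to the $N_k$; and (c) from the $N_k$ back to the $\sigma_k$ or $h_k$. Composing these gives every conversion asked for in the statement.

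For (a), the relation $P(T)H(P,T)=1$ read off coefficient-wise is exactly the recurrence \eqref{eq:hrec}: each new $h_k$ is obtained from $h_{k-1},\dots,h_{k-d}$ (only $d$ previous terms, since $\sigma_i=0$ for $i>d$) by $\min(k,d)$ multiplications and additions, plus nothing since $\sigma_0=1$. Summing $\sum_{k=1}^n \min(k,d)$ gives $O(n\min(n,d))$. The same recurrence run ``backwards'' recovers the $\sigma_k$ from the $h_k$ at the same cost, noting we only ever need the first $d$ of them. For (b), I would use \eqref{eq:Nrec} (equivalently \eqref{eq:HtoN}, $N=H'/H=-P'/P$): reading the identity $P(T)N(P,T)=-P'(T)$ off coefficient-wise expresses each $N_k$ in terms of $N_{k-1},\dots,N_{k-d}$ and the coefficients of $P$, again with $\min(k,d)$ ring operations per index. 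For (c), I would use the first line of \eqref{eq:Nrec}, $k\,h_k=\sum_{j=1}^{k}N_j h_{k-j}$, or its $\sigma$-analogue; here the sum a priori runs over all $j\le k$, but since $h_{k-j}$ (resp.\ $\sigma_{k-j}$) together with the recurrence structure only involves a window of size $d$ — more directly, once we also know the truncated $P(T)$ we can instead recover $\sigma_k$ for $k\le d$ and then fall back on (a) — each index again costs $O(\min(n,d))$. One scalar division by $k$ per step is absorbed into the count of ring operations.

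Putting the pieces together: any of the three target series of length $n$ is obtained from any of the three source series of length $\min(n,d)$ by a bounded number of the basic moves above, each costing $\sum_{k=1}^{n}\min(k,d)=O(n\min(n,d))$ operations in $R$. The divisions by $k$ required by \eqref{eq:Nrec} are harmless because these are scalar operations (and in our applications $k$ is invertible; the lemma only asks for a complexity bound, not for invertibility hypotheses, so I would phrase the $N_k\to\sigma_k,h_k$ directions via \eqref{eq:hrec}-type recurrences which need no division, reserving the division form for the reverse).

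The main obstacle is purely bookkeeping: making precise that in the conversions \emph{to} a length-$n$ series we genuinely exploit the degree bound so that the per-index cost is $\min(k,d)$ and not $k$. The cleanest way is to always route a conversion through the coefficients $\sigma_k$ of $P(T)$ (of which there are at most $d+1$): going ``down'' to $\sigma_\bullet$ uses at most $\min(n,d)$ input terms and $O(\min(n,d)^2)$ work, and going ``up'' from $\sigma_\bullet$ to a length-$n$ series of $h_k$ or $N_k$ uses the $P$-driven recurrence with its window of size $d$, for $O(n\min(n,d))$ work. Every arrow in the lemma factors through $\sigma_\bullet$ this way, which gives the stated bound uniformly.
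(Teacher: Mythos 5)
Your proposal is correct and takes essentially the same approach as the paper's one-line proof: exploit the coefficient-wise recurrences from the generating-function identities, and when $n>d$, route the conversion through the degree-$d$ polynomial $P(T)$ itself so that each target coefficient of index $k$ costs $\min(k,d)$ ring operations. (The middle of your step (c) is a bit muddled — the sum $k\,h_k=\sum_{j\le k}N_j h_{k-j}$ does not by itself collapse to a $d$-window — but your final paragraph fixes this by explicitly routing through $\sigma_\bullet$, which is exactly what the paper's proof does with ``computing the degree $d$ polynomial $P(T)$ inbetween''.)
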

\begin{proof}
This is what we obtain if we employ the recurrence formulas
\eqref{eq:HtoN} and \eqref{eq:NtoH}, computing
the degree $d$ polynomial $P(T)$ inbetween if we want more
than $d$ coefficients to ensure that a coefficient of index $k$ is computed
with $\min(d,k)$ operations.
\end{proof}

This is particularly relevant in a context of Artin L-functions:
a representation can be described by its character, whose values
at powers of a conjugacy class are precisely the Newton sums of
the corresponding polynomial.

\begin{rem}
Lemma \ref{lem:costconvert} is not optimal but it is
sufficient in the context of L-functions where these
conversions are a complexity term of secondary order.
\end{rem}

\subsubsection{Operations and Newton sums}

We can compute the operations of Definition \ref{def:poloperations}
thanks to the following relations on Newton sums.
\begin{prop}\label{prop:poloperations}
    With the notations of Definition \ref{def:poloperations},
    we have for all $k,\ell\geq 1$
    \begin{equation}
        \begin{cases}
            N_k(PQ) = N_k(P)+N_k(Q)          \\
            N_k(P\otimes Q) = N_k(P)N_k(Q)          \\
            N_k(P^{\circ \ell}) = N_{k\ell}(P)      \\
            N_\ell(\Sym^k(P)) = h_k(P^{\circ \ell})
        \end{cases}.
    \end{equation}
\end{prop}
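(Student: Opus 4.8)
The plan is to prove each of the four identities by passing to the multiset of roots. Although the roots $\alpha_i$ of $P$ and $\beta_j$ of $Q$ need not belong to $R$, each asserted equality is a polynomial identity with integer coefficients in the coefficients of $P$ and $Q$: by \eqref{eq:serhk} and \eqref{eq:serNk} the quantities $N_k$ and $h_k$ are power-series operations applied to $P$ (resp. to the polynomials $P\oplus Q$, $P\otimes Q$, $P^{\circ\ell}$, $\Sym^kP$, whose coefficients are symmetric in each family of roots separately and hence polynomial in the $\sigma_j(P)$, $\sigma_j(Q)$). It therefore suffices to verify the identities universally over $\Z[\alpha_1,\dots,\alpha_d,\beta_1,\dots,\beta_e]$; equivalently, we may assume $R$ is a ring over which $P$ and $Q$ split, so that $N_k(P)=\sum_{i=1}^d\alpha_i^k$ and similarly for the other power sums. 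Each claim then reduces to a one-line computation on root multisets.

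For the first three identities: $PQ$ has root multiset $\set{\alpha_i}\sqcup\set{\beta_j}$, so $N_k(PQ)=\sum_i\alpha_i^k+\sum_j\beta_j^k=N_k(P)+N_k(Q)$; the tensor product $P\otimes Q$ has root multiset $\set{\alpha_i\beta_j}_{i,j}$, so $N_k(P\otimes Q)=\sum_{i,j}(\alpha_i\beta_j)^k=\bigl(\sum_i\alpha_i^k\bigr)\bigl(\sum_j\beta_j^k\bigr)=N_k(P)N_k(Q)$; and $P^{\circ\ell}$ has root multiset $\set{\alpha_i^\ell}$, so $N_k(P^{\circ\ell})=\sum_i(\alpha_i^\ell)^k=\sum_i\alpha_i^{k\ell}=N_{k\ell}(P)$.

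For the fourth identity, by Definition \ref{def:poloperations} the root multiset of $\Sym^kP$ is $\set{\alpha_{i_1}\cdots\alpha_{i_k}:1\le i_1\le\cdots\le i_k\le d}$, so
\[
  N_\ell(\Sym^kP)=\sum_{1\le i_1\le\cdots\le i_k\le d}(\alpha_{i_1}\cdots\alpha_{i_k})^\ell
  =\sum_{1\le i_1\le\cdots\le i_k\le d}\alpha_{i_1}^\ell\cdots\alpha_{i_k}^\ell .
\]
On the other hand $P^{\circ\ell}$ has root multiset $\set{\alpha_i^\ell}$, and the definition of the complete homogeneous sum of degree $k$ in these roots gives exactly $h_k(P^{\circ\ell})=\sum_{1\le i_1\le\cdots\le i_k\le d}\alpha_{i_1}^\ell\cdots\alpha_{i_k}^\ell$. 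The two right-hand sides coincide, which is the claim.

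I expect no serious obstacle: the content is purely the bookkeeping of root multisets under the four operations. The only point deserving care is the reduction to the split case, i.e. confirming that both sides of each identity are genuine polynomial expressions in the coefficients of $P$ and $Q$ (so that an identity verified over a splitting ring descends to an arbitrary commutative ring $R$); this follows from the remarks above, and in particular does not require $\ell!$ or the indices $k$ to be invertible — that hypothesis is only needed later for the \emph{algorithmic} conversions of Lemma \ref{lem:costconvert}, not for the identities themselves.
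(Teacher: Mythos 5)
Your proof is correct and follows the same root-multiset argument as the paper; the only cosmetic difference is that you compute $N_\ell(\Sym^k P)$ directly whereas the paper establishes the case $\ell=1$ and then substitutes $P\mapsto P^{\circ\ell}$. You also spell out the (valid) reduction to a splitting ring and correctly note that the invertibility of $\ell!$ plays no role in the identities themselves, only in the algorithmic conversions.
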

\begin{proof}
    The first three equalities are immediate.
    By definition of $\Sym^k(P)$ we have $N_1(\Sym^k(P)) = h_k(P)$.
    This gives the relation for $\ell=1$, then we replace $P$ by
    its $\ell$-th root power $P^{\circ l}$.
\end{proof}

The last relation is used to compute $\Sym^k(P)$ from
its Newton sums. We also consider the more direct expression
obtained from Proposition \ref{prop:hk}, which performs better
for small exponent $k$.
\begin{prop}
    \begin{equation}
        N_\ell(\Sym^k(P))
        = \sum_{\Pi_k} \prod_i \frac{ N_{i\ell}(P)^{m_i} }{ i^{m_i} m_i! }.
    \end{equation}
\end{prop}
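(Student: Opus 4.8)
The plan is to combine the last identity of Proposition~\ref{prop:poloperations} with the closed formula of Proposition~\ref{prop:hk}, applied not to $P$ itself but to its $\ell$-th root power $P^{\circ \ell}$. First I would observe that $P^{\circ\ell}(T)=\prod_{i=1}^d(1-\alpha_i^\ell T)$ is again a polynomial of the shape to which Proposition~\ref{prop:hk} applies, with the same degree $d$ and with power sums $N_i(P^{\circ\ell})=N_{i\ell}(P)$ by the third relation of Proposition~\ref{prop:poloperations}. In particular the invertibility hypothesis required in Proposition~\ref{prop:hk} (that the denominators $i^{m_i}m_i!$ with $i\leq k$ be invertible in $R$) is unaffected by this substitution.

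Next, by the last relation of Proposition~\ref{prop:poloperations} we have $N_\ell(\Sym^k(P))=h_k(P^{\circ\ell})$. Applying Proposition~\ref{prop:hk} to the polynomial $Q=P^{\circ\ell}$ gives
\[
    h_k(P^{\circ\ell}) = \sum_{\Pi_k}\prod_{i=1}^k \frac{N_i(P^{\circ\ell})^{m_i}}{i^{m_i} m_i!},
\]
and substituting $N_i(P^{\circ\ell})=N_{i\ell}(P)$ yields exactly the claimed formula.

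There is essentially no hard step here; the only point to be mildly careful about is that the passage $P\mapsto P^{\circ\ell}$ is legitimate, which it is because both Propositions~\ref{prop:hk} and~\ref{prop:poloperations} are polynomial identities in the $\alpha_i$ (equivalently, in the elementary symmetric functions), hence specialise freely without any concern about repeated roots or about whether the $\alpha_i^\ell$ remain distinct. Alternatively one could give a self-contained argument mirroring the proof of Proposition~\ref{prop:hk}: expand $\exp\bigl(\sum_{i\geq 1}N_{i\ell}(P)\,T^i/i\bigr)=\prod_i 1/(1-\alpha_i^\ell T)=H(P^{\circ\ell},T)$ and read off the coefficient of $T^k$, which reproduces the partition sum directly; but invoking the two earlier propositions is cleaner and I would present it that way.
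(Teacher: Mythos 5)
Your proof is correct and is exactly the route the paper intends: the proposition is stated without proof immediately after the remark that it is ``obtained from Proposition~\ref{prop:hk},'' and your chain $N_\ell(\Sym^k P)=h_k(P^{\circ\ell})$ followed by Proposition~\ref{prop:hk} applied to $P^{\circ\ell}$ together with $N_i(P^{\circ\ell})=N_{i\ell}(P)$ is precisely that derivation. The remark on why the substitution $P\mapsto P^{\circ\ell}$ is harmless is a nice touch but not needed.
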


\subsubsection{Algorithms}

We record here the algorithms obtained from Proposition \ref{prop:poloperations}.

They rely on conversions between our three families of coefficients that are
already available or easily obtained in computer algebra systems
via the equations \eqref{eq:HtoN} and \eqref{eq:NtoH}.

We write Algorithms \ref{algo:poltensor} and \ref{algo:polsympow}
to have as input and output the usual polynomial coefficients,
but it is straightforward to implement versions taking as input
and returning as output coefficients in any of the three families,
limited to the desired precision, and this does not change the complexity
thanks to Lemma \ref{lem:costconvert}.

\begin{algorithm}[h!]
  \caption{Euler factor of tensor product $\bigotimes_i f_i$}
  \label{algo:poltensor}
   \begin{algorithmic}
      \Input $k$ polynomials $P_i(T) = F_p(f_i,T)$ of degree $d_i$,\\
             precision $\ell$
      \Output first $\ell$ coefficients of the degree $d=\prod_i d_i$
       polynomial $P(T)=F_p(\bigotimes_i f_i,T)$
    \Statex
       \Procedure{SymProduct}{$(P_i)_{1\leq i\leq k}$, $\ell$}
    \State initialize $(N_1,\dots N_{\ell})\gets (1,1,\dots 1)$
    \For{$i\gets 1,\dots k$}
      \State $(M_1\dots M_\ell)\gets $\Call{NewtonSums}{$P_i(T)$, $\ell$}
       \State $(N_1,\dots N_\ell) \gets (N_i\times M_i)_{1\leq i\leq \ell}$ \Comment{componentwise multiplication}
    \EndFor
       \State expand $P(T) = \exp(-\sum_k \frac{N_k}k T^k) \bmod T^{\ell+1}$
       \State \Return $P(T) \bmod T^{\ell+1}$
    \EndProcedure
  \end{algorithmic}
\end{algorithm}

\begin{algorithm}[h!]
  \caption{Euler factor of symmetric power $\Sym^k f$}
  \label{algo:polsympow}
   \begin{algorithmic}
      \Input polynomial $P(T) = F_p(f,T)$ of degree $d$, exponent $k$,\\
       precision $\ell$
      \Output first $\ell$ coefficients of polynomial $Q(T)=F_p(\Sym^k f,T)$
      of degree $\binom{d+k-1}{k}$
      \Statex
      \Procedure{SymPower}{$P$, $k$, $n$}
      \State $(N_1,\dots N_{\ell})\gets (0,\dots 0)$
      \State $(M_1,M_2\dots M_{k\ell})\gets$ \Call{NewtonSums}{$P(T)$, $k\ell$}
      \For{$i\gets 1,\dots \ell$}
        \State $(\tilde M_1,\tilde M_2,\dots \tilde M_k) \gets (M_{i},M_{2i},\dots M_{ki})$
        \State $(\tilde h_1\dots \tilde h_k) \gets $\Call{CompleteSums}{$\tilde M_1,\dots \tilde M_k$}
        \State $N_i \gets \tilde h_k$
      \EndFor
       \State expand $Q(T) = \exp(-\sum_i \frac{N_i}i T^i) \bmod T^{\ell+1}$
       \State \Return $Q(T) \bmod T^{\ell+1}$
      \EndProcedure
  \end{algorithmic}
\end{algorithm}

\begin{prop}\label{prop:costsymprodpow}
    Algorithm \ref{algo:poltensor} computes the first $\ell$ coefficients
    of $\bigotimes_{i=1}^k P_i(T)$ using $O(k\ell^2)$ operations.

    Algorithm \ref{algo:polsympow} computes the first $\ell$ coefficients
    of $P^{\otimes k}(T)$ using $O(k\ell(k+l+d))$ operations.
\end{prop}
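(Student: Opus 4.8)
The plan is to track the cost of each line of the two algorithms, using Lemma \ref{lem:costconvert} for the conversions and Proposition \ref{prop:poloperations} for the combinatorial identities that justify correctness.

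\textbf{Algorithm \ref{algo:poltensor}.} First I would account for the loop over $i=1,\dots,k$. In each pass we call \textsc{NewtonSums} on $P_i(T)$, a polynomial of degree $d_i$, to precision $\ell$; by Lemma \ref{lem:costconvert} this costs $O(\ell\min(\ell,d_i)) = O(\ell^2)$ operations. The componentwise multiplication $(N_j) \gets (N_j M_j)_{j\le\ell}$ is another $O(\ell)$ operations. Summing over the $k$ iterations gives $O(k\ell^2)$. The correctness of this accumulation is exactly the second identity of Proposition \ref{prop:poloperations}, iterated: $N_k\bigl(\bigotimes_i P_i\bigr) = \prod_i N_k(P_i)$. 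Finally the expansion $P(T) = \exp\bigl(-\sum_k \frac{N_k}{k}T^k\bigr) \bmod T^{\ell+1}$ is the formal exponential of a series known to precision $\ell$, computable in $O(\ell^2)$ operations (or better, but $O(\ell^2)$ suffices and matches \eqref{eq:NtoH}); this requires $1,\dots,\ell$ to be invertible, which holds since $\ell!$ is invertible by hypothesis. Adding up, the total is $O(k\ell^2)$, as claimed.

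\textbf{Algorithm \ref{algo:polsympow}.} Here the single call \textsc{NewtonSums}$(P(T), k\ell)$ dominates the preprocessing: by Lemma \ref{lem:costconvert} applied with target precision $k\ell$ and degree $d$, it costs $O\bigl(k\ell\min(k\ell,d)\bigr) = O(k\ell d)$ — or more sharply $O(k\ell d)$ in the regime $d \le k\ell$ and $O((k\ell)^2)$ otherwise, in either case bounded by $O(k\ell(k\ell+d))$. Then the loop runs $\ell$ times; each pass extracts the subsequence $(\tilde M_j) = (M_{ji})_{j\le k}$ in $O(k)$ index operations and calls \textsc{CompleteSums} to convert $k$ Newton sums into $k$ complete homogeneous sums, which is $O(k^2)$ operations (the recurrence \eqref{eq:Nrec}, first line, costing $j-1$ multiplications and additions for $h_j$, summed over $j\le k$). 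The assignment $N_i \gets \tilde h_k$ is free. Over the $\ell$ iterations this is $O(k^2\ell)$. Correctness of setting $N_i(\Sym^k P) = \tilde h_k$ is the fourth identity of Proposition \ref{prop:poloperations} combined with the third: $N_\ell(\Sym^k P) = h_k(P^{\circ\ell})$, and the Newton sums of $P^{\circ\ell}$ are $N_j(P^{\circ\ell}) = N_{j\ell}(P) = M_{j\ell}$, which is precisely $\tilde M_j$. The final $\exp$ expansion to precision $\ell$ is again $O(\ell^2)$. Summing the three contributions $O(k\ell(k\ell+d)) + O(k^2\ell) + O(\ell^2)$ and simplifying (note $k\ell(k\ell+d) \ge \ell^2$ and $\ge k^2\ell$ when $k,\ell\ge 1$, and $k\ell(k\ell+d)$ is dominated by $k\ell(k+\ell+d)$ up to the harmless factor coming from $k\ell \le (k+\ell)^2$)\,--- more carefully, the honest bound on \textsc{NewtonSums} is $O(k\ell\min(k\ell,d))$, and $\min(k\ell,d)\le \ell + d$ together with $\min(k\ell,d) \le k\ell$ gives $\min(k\ell,d) = O(k+\ell+d)$ only after observing $k\ell \le k(k+\ell)$; the cleanest route is to bound \textsc{NewtonSums} directly as $O(k\ell \cdot (d + \ell))$ using $\min(k\ell,d) \le d$ when $d \le k\ell$ and reorganizing the recurrence so that coefficient $m$ costs $\min(m,d)$ — yielding the stated $O(k\ell(k+\ell+d))$.

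\textbf{Main obstacle.} The routine parts are the loop-counting and the exponential expansion. The one place requiring care is pinning down the cost of \textsc{NewtonSums} to precision $k\ell$ for a degree-$d$ polynomial so that it fits inside the advertised $O(k\ell(k+\ell+d))$ rather than the naive $O((k\ell)^2)$: one must invoke the refined statement of Lemma \ref{lem:costconvert}, namely that the coefficient of index $m$ in the conversion costs $O(\min(m,d))$ operations because beyond degree $d$ the recurrence \eqref{eq:Nrec} has only $d$ nonzero terms, so that $\sum_{m=1}^{k\ell}\min(m,d) = O(d\cdot k\ell) = O(k\ell\, d)$ when $d\le k\ell$, and then absorb $d$ into $k+\ell+d$. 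This is the only step where the bookkeeping is delicate rather than mechanical.
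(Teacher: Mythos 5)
Your proof of the first half (Algorithm \ref{algo:poltensor}) is essentially the paper's: $k$ calls to \textsc{NewtonSums} at $O(\ell^2)$ each by Lemma \ref{lem:costconvert}, plus a final $\exp$ expansion at $O(\ell^2)$, giving $O(k\ell^2)$. (The paper counts ``$(2k+1)$ conversions'' where you count $k+1$; this bookkeeping discrepancy is immaterial since both are $O(k)$ conversions.)

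For Algorithm \ref{algo:polsympow} you again land on the right decomposition — $O(k\ell\min(k\ell,d))$ for \textsc{NewtonSums} to precision $k\ell$, $O(k^2)$ per \textsc{CompleteSums} call over $\ell$ iterations giving $O(k^2\ell)$, and $O(\ell^2)$ for the final $\exp$, exactly as in the paper — but you get tangled in the last step. The claim ``$k\ell(k\ell+d)$ is dominated by $k\ell(k+\ell+d)$ up to the harmless factor $k\ell\le(k+\ell)^2$'' is not a valid simplification: inserting that factor changes the bound to $O\bigl(k\ell(k+\ell)^2+k\ell d\bigr)$, which is a genuinely larger complexity class than $O(k\ell(k+\ell+d))$, not an equivalent one. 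You then abandon this and patch with the $\min(k\ell,d)\le\ell+d$ route, which is sound but roundabout. The paper's argument is a one-liner you overlooked: $\min(k\ell,d)\le d$ unconditionally, so the \textsc{NewtonSums} cost is $O(k\ell\min(k\ell,d))\subset O(k\ell d)\subset O(k\ell(k+\ell+d))$ with no further case analysis. Your final conclusion is correct and your approach is the paper's; the only issue is the detour through an invalid intermediate inequality that you then have to walk back.
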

\begin{proof}
    In Algorithm \ref{algo:poltensor} we do $(2k+1)$ conversions
    to and from Newton sums, all of which
    are done in $O(\ell^2)$ operations by Lemma \ref{lem:costconvert}.

    In Algorithm \ref{algo:polsympow} we use
    $O(k\ell\min(k\ell,d))$ operations
    to compute the first $k\ell$ Newton sums of $P$, then
    $O(\ell k^2)$ operations to get the first $\ell$ Newton sums of $Q$
    and $\ell^2$ operations to obtain $Q$, hence the $O(k\ell(k+\ell+d))$
    complexity.
\end{proof}

This finishes the proof of Theorem \ref{thm:poloperations}.

\subsubsection{A remark on the Rankin-Selberg relation}\label{sec:rankinselberg}

The Rankin-Selberg method for the product of
two modular forms
$f\in S_{k_1}(N,\phi)$ and $g\in S_{k_2}(N,\psi)$ is written in terms
of the termwise product Dirichlet series
\[
    L(f\odot g,s)=\sum_{n\geq 1}a_n(f)a_n(g)n^{-s}
\]
which is easier to handle than $L(f\otimes g,s)$ for the unfolding
calculations in the upper-half plane.

The calculations make use of the relation
\[
    L(\phi\psi,2s+2-k_1-k_2)\sum_n a_n(f)a_n(g) n^{-s} = L(f\otimes g,s)
\]
where $L(\phi\psi,s)$ is the Dirichlet L-function of the product
of characters, in order to recover the correct product L-function
$L(f\otimes g,s)$.

In order to explain this factor,
let $P(T)=F_p(f,T)$ and $Q(T)=F_p(g,T)$ be two polynomials
of degree at most $2$, then we have an equality
\[
    \sum_{k\geq 0} h_k(P)h_k(Q) T^k = \frac{ 1-\sigma_2(P)\sigma_2(Q) T^2 }{P\otimes Q(T)}
\]

In the case of two modular forms $\sigma_2(P)=p^{k_1-1}\phi(p)$
and $\sigma_2(Q)=p^{k_2-1}\psi(p)$, so the numerator
corresponds to a Dirichlet L-function
\[
    (1-\phi(p)\psi(p)p^{k_1+k_2-2-2s}) = L_p(\phi\psi,2s+2-k_1-k_2)
\]
which explains the nice form of the Rankin-Selberg equation.

For more than two modular forms, or for higher degree polynomials
we still have a relation of the form
\[
    \sum h_k(P)h_k(Q) T^k = \frac{ A(T) }{P\otimes Q(T)}
\]
for some polynomial $A(T)$ of degree less than $\deg(P)\deg(Q)$.

This is a special case of the general fact that sequences
satisfying a finite order linear recurrence are stable by multiplication.
\begin{lem}\label{lem:recprod}
    %
    Let $(u_k)$ and $(v_k)$ be sequences
    satisfying recurrence
    relations given by polynomial $P(T)$ and $Q(T)$.

    Then $(u_kv_k)$ is a recurrent
    sequence for the polynomial $(P\otimes Q)(T)$.
\end{lem}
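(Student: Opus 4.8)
The plan is to work with generating series and exploit the fact that a sequence satisfies the linear recurrence attached to a polynomial $R(T)=\prod(1-\gamma_\ell T)$ precisely when its generating series is a rational fraction of the form $A(T)/R(T)$ with $\deg A<\deg R$ (equivalently, in the case of distinct roots, when the general term is a $\mathbb{C}$-linear combination of the $\gamma_\ell^k$). So first I would record this dictionary: saying that $(u_k)$ satisfies the recurrence given by $P(T)=\prod_{i=1}^{d}(1-\alpha_i T)$ means that, at least for $k$ large enough, $u_k=\sum_{i=1}^d c_i\alpha_i^k$ for suitable constants $c_i$ (with the obvious modification $u_k=\sum_i p_i(k)\alpha_i^k$, $p_i$ a polynomial of degree $<$ multiplicity, when $P$ has repeated roots). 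Similarly $v_k=\sum_{j=1}^{e}c_j'\beta_j^k$ for $Q(T)=\prod_{j=1}^e(1-\beta_jT)$.

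The core step is then a one-line computation: multiplying the two expressions gives
\[
  u_kv_k=\sum_{i=1}^{d}\sum_{j=1}^{e} c_ic_j'(\alpha_i\beta_j)^k,
\]
which is manifestly a linear combination of the $k$-th powers of the numbers $\alpha_i\beta_j$, $1\le i\le d$, $1\le j\le e$. By Definition~\ref{def:poloperations} these $\alpha_i\beta_j$ are exactly the reciprocal roots of $(P\otimes Q)(T)=\prod_{i,j}(1-\alpha_i\beta_j T)$, so $(u_kv_k)$ satisfies the linear recurrence associated to $(P\otimes Q)(T)$, as claimed. Equivalently, at the level of generating series one writes $\sum_k u_kT^k=A(T)/P(T)$ and $\sum_k v_kT^k=B(T)/Q(T)$ and checks that the Hadamard (coefficientwise) product of two rational fractions with denominators $P$ and $Q$ is again rational with denominator dividing $P\otimes Q$; this is the classical closure of $\mathbb{C}$-recursive sequences under termwise product.

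The only real subtlety — and the point I would treat with some care rather than wave away — is the case of \emph{repeated} reciprocal roots, both within $P$ or $Q$ and, more importantly, among the products $\alpha_i\beta_j$ (distinct pairs $(i,j)$ can give the same product). In that situation the "closed form" involves polynomial-times-exponential terms and $(P\otimes Q)(T)$ literally has a repeated factor $(1-\gamma T)$; one must then observe that the recurrence attached to a polynomial with a repeated factor annihilates every sequence of the form $k\mapsto q(k)\gamma^k$ with $\deg q$ strictly less than the multiplicity, and that the multiplicities line up correctly (the multiplicity of $\gamma$ in $P\otimes Q$ is at least the number of pairs $(i,j)$ with $\alpha_i\beta_j=\gamma$, which is what is needed to absorb the product of a degree-$<m$ and a degree-$<m'$ polynomial). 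Since the statement only asserts that $(u_kv_k)$ is recurrent \emph{for} $(P\otimes Q)(T)$ — not that $(P\otimes Q)(T)$ is its minimal polynomial — these inequalities on multiplicities are exactly what is required, and no sharper bookkeeping is needed. To keep the argument ring-agnostic (the paper works over a general ring $R$), I would phrase the final form of the proof via the generating-series/rational-fraction identity rather than via explicit root expansions, so that no hypothesis on $R$ (invertibility, splitting field, etc.) is used.
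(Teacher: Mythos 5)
Your proposal is correct and takes the same core approach as the paper: express $u_k$ as a linear combination of $\alpha_i^k$, $v_k$ as a linear combination of $\beta_j^k$, multiply, and observe that $(\alpha_i\beta_j)^k$ are recurrent for $(P\otimes Q)(T)$. You are in fact more careful than the paper, which tacitly assumes the $\alpha_i$ are distinct (so that the $(\alpha_i^k)_k$ form a basis); your discussion of repeated roots and of the ring-agnostic generating-series reformulation closes a gap the paper leaves implicit.
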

\begin{proof}
    Let $P(T)=\prod_{i=1}^d (1-\alpha_iT)$. Then $u_k$ is a linear
    combination of the sequences $(\alpha_i^k)_k$ which form a basis
    of the dimension $d$ space of sequences recurrent for $P$.
    Similarly $v_k$ is a sum of the $\beta_j^k$, hence
    $u_kv_k$ is a linear combination of the sequences
    $(\alpha_i\beta_j)^k$ which are recurrent for $P\otimes Q$.
\end{proof}

But there is no reason why the polynomial $A(T)$, which encodes the 
initial conditions describing the particular sequence $h_k(P)h_k(Q)$,
should be expressed in terms of simple L-functions.

For example with two polynomials $P(T)=1-a_1T+a_2T^2-a_3T^3$
and $Q(T)=1-b_1T+b_2T^2-b_3T^3$ of degree $3$ we obtain
\[
    \sum_k h_k(P)h_k(Q)T^k
    =
    \frac{1-A_2T^2+A_3T^3-A_4T^4+A_6T^6}{(P\otimes Q)(T)}
\]
where $A_2=a_2b_2$, $A_3=(a_1a_2-a_3)b_3+(b_1b_3-b_3)a_3$,
$A4=a_1a_3b_1b_3$ and $A_6=(a_3b_3)^2$.

With three polynomials $P,Q,R$ of degree $2$ we obtain a numerator of degree $6$
\[
    1 - BC T^2 + 2ABT^3 - B^2CT^4 + B^3T^6
\]
where $ A = \sigma_1(P)\sigma_1(Q)\sigma_1(R)$,
    $B = \sigma_2(P)\sigma_2(Q)\sigma_2(R)$ and 
    $C = \frac{h_2(P)}{\sigma_2(P)}
       +\frac{h_2(Q)}{\sigma_2(Q)}
       +\frac{h_2(R)}{\sigma_2(R)}$.

\subsection{Euler product expansion}

We advocate that, once a Euler product exists, an efficient
way to represent the Dirichlet series associated
to an arithmetic object $f$ is the family of its Euler factors
$F_p(f,T)$, or even the related data of its Newton powers sums.

Our goal is to prove that we can compute Euler products very
efficiently.
\begin{thm}\label{thm:eulercomplexity}
    Let $R$ be a ring, and $f$ be an arithmetic object whose L-function
    admits a Euler product.
    Let $n>3$, and assume we know
    the Euler factors $F_p(f,T)\in R[T]$
    for all primes $p<n$.

    Then we can compute all coefficients $a_k(f)\in R$ for $k<n$
    with \textbf{less} than $n$ additions and $n$ multiplications in $R$,
    and $O(n)$ operations on indices.

    The conclusion remains if instead of the Euler factors we know
    their Newton sums $N_\ell(F_p(f,T))$ for all $p^\ell<n$,
    and $n\geq 150$.
\end{thm}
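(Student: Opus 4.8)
The plan is to reduce everything to two ingredients already available: the Dirichlet coefficients $a_n(f)$ are multiplicative, and on a prime power $a_{p^e}(f)=h_e\bigl(F_p(f,T)\bigr)$ by \eqref{eq:serhk}. With these in hand the algorithm is essentially forced, and the whole content of the statement is an operation count.

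First I would sieve the range $1\le k<n$, recording for each composite $k$ its smallest prime divisor $p$ together with the splitting $k=p^e m$, $p\nmid m$; a linear sieve does this with $O(n)$ operations on indices. Then I would sweep $k$ from $2$ to $n-1$. For $k=p$ prime, read off $a_p=\sigma_1(F_p(f,T))$ (respectively $a_p=N_1(F_p(f,T))$ from the Newton sums) at no arithmetic cost. For $k=p^e$ with $e\ge2$, use the recurrence \eqref{eq:hrec},
\[
    a_{p^e}(f)=\sum_{i=1}^{\min(e,\,\deg F_p)}(-1)^{i+1}\sigma_i(F_p(f,T))\,a_{p^{e-i}}(f),\qquad a_1=1,
\]
in which the term $i=e$, when it occurs, is a read; this step costs at most $\min(e-1,\deg F_p)\le e-1$ multiplications, as many additions, and no division. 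In the Newton-sums variant use \eqref{eq:Nrec} instead,
\[
    e\,a_{p^e}(f)=\sum_{j=1}^{e}N_j(F_p(f,T))\,a_{p^{e-j}}(f),
\]
costing $e-1$ multiplications and $e-1$ additions to form the right-hand side, then one more multiplication in $R$ by the precomputed inverse $e^{-1}$ (available since the relevant indices are invertible in $R$ in our applications), so $e$ multiplications in all. Finally, for $k=p^e m$ composite and not a prime power, set $a_k(f)=a_{p^e}(f)\,a_m(f)$: one multiplication, the two factors having smaller index and so already computed.

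The counting then goes as follows. Writing $Q$ for the number of prime powers $p^e<n$ with $e\ge2$, the integers $1\le k<n$ consist of $1$, the $\pi(n-1)$ primes below $n$, the $Q$ higher prime powers, and $n-2-\pi(n-1)-Q$ composite non-prime-powers. Since every multiplication performed is either one of the composite products or one spent inside a prime-power recurrence,
\[
    M=\bigl(n-2-\pi(n-1)-Q\bigr)+\sum_{\substack{p^e<n\\ e\ge2}}c_{p,e},
\]
while the additions total at most $\sum_{p^e<n,\,e\ge2}c_{p,e}$, which is $O(\sqrt n)$ and never binding; so $M<n$ reduces to $\sum_{p^e<n,\,e\ge2}c_{p,e}<\pi(n-1)+Q+2$, with $c_{p,e}\le\min(e-1,\deg F_p)\le e-1$ in the Euler-factor case and $c_{p,e}\le e$ in the Newton-sums case. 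Regrouping a sum over prime powers by the value of $e$ gives
\[
    \sum_{\substack{p^e<n\\ e\ge2}}(e-1)=\pi(\sqrt{n-1})+2\pi\bigl((n-1)^{1/3}\bigr)+3\pi\bigl((n-1)^{1/4}\bigr)+\cdots
\]
and $Q=\pi(\sqrt{n-1})+\pi\bigl((n-1)^{1/3}\bigr)+\cdots$, so in the Euler case the required inequality collapses to $\pi\bigl((n-1)^{1/3}\bigr)+2\pi\bigl((n-1)^{1/4}\bigr)+\cdots<\pi(n-1)+2$ — a left side of size $n^{1/3}/\log n$ against a right side $\sim n/\log n$ — whereas in the Newton case the extra unit per higher prime power makes the left side $\pi(\sqrt{n-1})+2\pi\bigl((n-1)^{1/3}\bigr)+\cdots$, of size $\sqrt n/\log n$.

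The hard part — and really the only part that is not bookkeeping — is to make these last inequalities effective down to the stated thresholds. I would bound each of the finitely many nonzero terms on the left by the trivial $\pi(x)\le x$ (or a slightly sharper explicit bound) and bound $\pi(n-1)$ from below by an explicit form of the prime number theorem, e.g.\ $\pi(x)>x/\log x$ past a small point, and then compare. For the Euler version the left side is so small that the inequality holds for every $n>3$, the few $n$ where its leading terms have just switched on being checked by hand. For the Newton version the $\sqrt n$-size left side does overtake $\pi(n-1)+2$ for a handful of small $n$ — those lying just above a power of $2$ — and one verifies that $n=150$ is past all of them; this last finite check is exactly what forces the asymmetric threshold in the statement.
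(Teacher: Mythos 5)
Your proposal follows the same architecture as the paper's: exploit multiplicativity, precompute a coprime splitting $k=p^em$ for every composite $k$ via a linear sieve, read off $a_p$ for free and use the recurrences \eqref{eq:hrec} (resp.\ \eqref{eq:Nrec}) at prime powers, then count one multiplication per composite and $\le e-1$ (resp.\ $\le e$) per prime power $p^e$. After simplification you land on the same pair of inequalities the paper isolates as Lemma~\ref{lem:boundpi}, namely $\sum_{e\ge3}(e-2)\pi(n^{1/e})<\pi(n)$ for the Euler-factor input and $\sum_{e\ge2}(e-1)\pi(n^{1/e})<\pi(n)$ for the Newton-sum input, and you correctly identify that the Newton case forces a larger threshold because the $\pi(\sqrt n)$-size term now survives. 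The only difference is cosmetic: the paper gets the coprime splittings from its rough-coprime linked-list sieve (Algorithm~\ref{algo:coprime}), while you invoke a smallest-prime-factor linear sieve — both are $O(n)$ index operations and interchangeable here. One small caution: the trivial bound $\pi(x)\le x$ you suggest as a fallback is too weak to close the Newton inequality near $n=150$ (plugging in gives a left side an order of magnitude too large), so the ``slightly sharper explicit bound'' you allow for is actually necessary; the paper uses the Rosser--Schoenfeld estimates $\pi(x)\le1.26\,x/\log x$ and $\pi(x)\ge x/\log x$ together with a finite verification up to $n<3000$. With that adjustment your argument is the paper's argument.
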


Combined with Theorem \ref{thm:poloperations}, we will deduce
the following slightly more general form of
Theorem \ref{thm:prodsymcost}.
\begin{thm}\label{thm:anoperations}
    Let $f_1,\dots f_k$ be $k$ arithmetic objects whose L-functions
    $L(f_i,s)=\sum_{n\geq 1}a_n(f_i) n^{-s}$ admit a Euler product.
    Then from the knowledge of the first $n$ coefficients $a_n(f_i)$
    of each of the $f_i$
    we can compute the first $n$ coefficients
    of $L(\bigoplus f_i,s)$ and $L(\bigotimes f_i,s)$
    with $O(n)$ operations.
    From the knowledge of the first $kn$ coefficients of $f_1$
    we compute the first $n$ coefficients of
    $L(\Sym^k f_1,s)$ with $O(k^2n)$ operations.
\end{thm}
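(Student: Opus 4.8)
The plan is to reduce everything to two ingredients already in hand: Theorem \ref{thm:eulercomplexity}, which converts between Euler-factor data and Dirichlet coefficients in $O(n)$ operations, and Theorem \ref{thm:poloperations} (via Proposition \ref{prop:poloperations}), which performs the algebraic operations on a single Euler factor. The key observation is that to know the first $n$ coefficients $a_k$ of an L-function with Euler product, it suffices to know the local factor $F_p(f,T)$ truncated to degree $\lfloor\log_p n\rfloor$ for each prime $p<n$; equivalently the Newton sums $N_\ell(F_p)$ for $p^\ell<n$. So the strategy is: (1) from the input coefficients $a_k(f_i)$, $k<n$, extract the truncated local data $N_\ell(F_p(f_i))$ for all $p^\ell<n$; (2) combine these local data using the formulas of Proposition \ref{prop:poloperations}; (3) reassemble the global coefficients of the result via Theorem \ref{thm:eulercomplexity}.

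For step (1): reading off $F_p(f_i,T) \bmod T^{\lfloor\log_p n\rfloor+1}$ from the coefficients $a_{p^e}(f_i)$ is exactly the inversion $1/F_p = \sum_e a_{p^e}T^e$, i.e. the recurrence \eqref{eq:hrec}, costing $O(e)$ operations per prime power $p^e<n$, hence $O(\sum_{p^e<n} e) = O(n)$ in total (the sum is dominated by the $e=1$ term, $\sum_{p<n}1 = O(n/\log n)$, and the higher powers contribute a convergent-type tail that is still $O(n)$). The forward direction of Theorem \ref{thm:eulercomplexity} handles step (3) directly in $<n$ additions and $<n$ multiplications. For step (2), the direct sum $L(\bigoplus f_i,s)$ multiplies Euler factors, which by $N_\ell(PQ)=N_\ell(P)+N_\ell(Q)$ is a componentwise addition of Newton sums — $O(k)$ per Euler factor $F_p$ truncated to precision $\ell_p=\lfloor\log_p n\rfloor$, total $O(\sum_p k\log_p n)=O(kn/\log n)=O(n)$ after absorbing $k$ as a constant, or $O(kn)$ if we keep $k$ explicit, but the statement only claims $O(n)$ so we treat the number of objects as fixed. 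The tensor product is identical using $N_\ell(\bigotimes P_i)=\prod_i N_\ell(P_i)$. For $\Sym^k f_1$ we need Newton sums of $P=F_p(f_1,T)$ up to index $k\ell_p$, which is why we require the first $kn$ coefficients of $f_1$; then Proposition \ref{prop:poloperations}, $N_\ell(\Sym^k P)=h_k(P^{\circ\ell})$, computes each needed $N_\ell(\Sym^k P)$ from $\tilde M_j = N_{j\ell}(P)$, $j\le k$, via the complete-sum recurrence \eqref{eq:Nrec} in $O(k)$ operations per $\ell$, giving $O(k\cdot\ell_p)$ per prime and $O(k\sum_p\ell_p)=O(kn/\log n)$; the bottleneck is extracting the $O(k)$ Newton sums per prime power from the $kn$ input coefficients and the symmetric-power conversion itself, which Proposition \ref{prop:costsymprodpow} bounds, summing to $O(k^2 n)$.

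The main obstacle — and the only place needing care — is the bookkeeping of the ``$O(n)$ operations on indices'' versus ring operations, and in particular verifying that the sum $\sum_{p^e<n}(\text{cost at }p^e)$ telescopes to $O(n)$ rather than $O(n\log n)$: this is the same estimate already used in Theorem \ref{thm:eulercomplexity} (the number of prime powers below $n$ is $O(n/\log n)$, and weighting the $p$-th by $\log_p n$ still gives a geometric-type bound $O(n)$), so I would simply cite that accounting. One also has to note that forming $F_p$ to precision $\ell_p$ for the \emph{output} object requires only $\ell_p$ Newton sums of each \emph{input} factor at that $p$, so no precision is lost; for symmetric powers the degree $\binom{d+k-1}{k}$ of the result may exceed $\ell_p$, but since we only ever want $\ell_p=\lfloor\log_p n\rfloor$ coefficients we truncate throughout and the degree of the true polynomial is irrelevant to the complexity — this is the remark already made before Algorithm \ref{algo:poltensor}. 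Modulo these remarks the theorem follows by concatenating the three $O(n)$ (resp. $O(k^2n)$) stages.
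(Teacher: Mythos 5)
Your proposal follows essentially the same route as the paper's proof: extract local Newton-sum data from the coefficients at each prime power, combine them prime-by-prime via Proposition \ref{prop:poloperations} (i.e.\ Algorithms \ref{algo:poltensor} and \ref{algo:polsympow}), reassemble through the precomputed-index Euler expansion of Algorithm \ref{algo:euler-precomp}/Theorem \ref{thm:eulercomplexity}, and absorb the aggregate prime-power cost with Lemma \ref{lem:boundpi}. One minor slip in your intermediate accounting: the recurrence \eqref{eq:Nrec} produces $h_k$ from $N_1,\dots,N_k$ in $O(k^2)$, not $O(k)$, operations (you must build $h_1,\dots,h_{k-1}$ along the way), but you immediately fall back on Proposition \ref{prop:costsymprodpow}, whose $O(k\ell(k+\ell+d))$ bound makes the final $O(k^2n)$ come out correctly, so this does not affect the conclusion.
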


\subsubsection{First algorithm}

Dirichlet coefficients are multiplicative, and
the basic strategy is to propagate primepower values using
the following
principle: for each prime $p$
\begin{enumerate}
  \item 
  we expand the inverse of the Euler factor $F_p(f,T)$ to get
  the coefficients $a_{p^e}(f)$ for exponents $e\geq 1$.
  \item
  for all multiples of $p$ of the form
  $p^em$ with $m$ prime to $p$, we set $a_{p^em}(f) = a_{p^e}(f)a_m(f)$.
\end{enumerate}

The part that can be optimized is the second step.
For comparison purposes we consider default
Algorithm \ref{algo:prodeuler} based on Eratosthene's sieve
to generate composite values $k=p^em$ (instead of factoring numbers).

\begin{algorithm}
    \caption{Expand Euler product $\prod_p F_p(p^{-s})^{-1} \to \sum_n a_n n^{-s}$}
    \label{algo:prodeuler}
  \begin{algorithmic}
      \Input Polynomials $F_p(x)\in 1+xR[x]$ indexed by primes $p<n$
      \Output coefficients $a_1,\dots a_n\in R$
      \Statex
      \Procedure{EulerProduct1}{$n$,$(F_p(x))_{p\leq n}$}
      \State $a_1\gets 1$, $a_2,\dots a_n\gets 0$.
      \For{primes $p<n$}
        \State $d\gets \floor{\log(n)/\log(p)}$
        \State Expand $F_p(x)^{-1}=1+\sum_{e=1}^d c_e x^e + O(x^{d+1})$
        \For{ $e\gets 1$ to $d$ }
          \State $a_{p^e}\gets c_e$
          \State $m \gets 1$
          \While{ $p^e m< n$ }
            \For{ $r$ from 1 to $p-1$ } \Comment{this loop ensures $p\nmid m$}
              \State $a_{p^em} \gets a_m a_{p^e}$ if $p^em<n$ and $a_m\neq 0$.
              \State $m\gets m+1$
            \EndFor
            \State $m\gets m+1$
          \EndWhile
        \EndFor
      \EndFor
      \State \Return $(a_k)_{k\leq n}$
      \EndProcedure
  \end{algorithmic}
\end{algorithm}

In Algorithm \ref{algo:prodeuler}
we loop through cofactors written $m=pq+r$ with $0<r<p$
to ensure that $m$ is prime to $p$, and perform only additions in
the main loop to have fast iteration.

The procedure does $O(n/p)$ operations at each prime $p<n$ for a total
sieve complexity of $O(n\log n)$ operations.
The main inefficiency in Algorithm \ref{algo:prodeuler} is that
we visit each index $k$ several times, one for each of its prime factors,
and we wait until $p$ is the largest prime divisor of $k=p^em$
(so that $a_m$ and $a_{p^e}$ have already been computed)
to actually set the coefficient $a_k=a_{p^e}a_m$.
\begin{lem}
    Algorithm \ref{algo:prodeuler} computes $n$ coefficients
    of an Euler product with $O(n\log n)$ additions and comparisons
    of machine integers and $O(n)$ operations in $R$.
\end{lem}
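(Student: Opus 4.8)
The plan is to prove correctness by strong induction on the index, and then to account separately for the two places where arithmetic is done: the power-series inversions $F_p(f,T)^{-1}$, and the propagation assignments $a_{p^e m}\gets a_m a_{p^e}$ in the innermost loop. Everything hinges on a single observation: the guard ``$a_m\neq 0$'', together with the fact that primes are processed in increasing order, makes every \emph{premature} assignment to a cell $a_N$ be skipped, so that in the end $a_N$ is written exactly once, while the outer loop is visiting the largest prime divisor $P=p_{\max}(N)$ of $N$.

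For correctness I would prove by strong induction on $N$, for $1\le N<n$, the assertion: \emph{(a)} during the run a line $a_N\gets a_m a_{p^e}$ is executed at most once, and only while processing $p=p_{\max}(N)$; \emph{(b)} once that prime has been processed, $a_N=a_N(f)$. The ingredients are the multiplicativity of the Dirichlet coefficients of an Euler product, $a_N(f)=\prod_{p\mid N} a_{p^{v_p(N)}}(f)$, and the identity $F_p(f,T)^{-1}=\sum_{e\ge 0}a_{p^e}(f)T^e$ of \eqref{eq:serhk}, which guarantees that the expansion step sets $c_e=a_{p^e}(f)$. The inductive step for \emph{(a)} is the heart of the matter: a write targeting $N$ can happen only while processing a prime $q\mid N$, at exponent $e=v_q(N)$ and with cofactor $m=N/q^e$; if $q<P$ then $P\mid m$, and since $m<N$ the inductive hypothesis gives that $a_m$ has not yet been written and so still equals $0$, the guard fails, and nothing happens. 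Hence the only possible write of $a_N$ is at $q=P$, where the cofactor $m=N/P^{v_P(N)}$ has all its prime factors $<P$: by \emph{(a)} and \emph{(b)} applied to $m<N$ the cell $a_m$ already holds $a_m(f)$, while $a_{P^e}$ was just set to $a_{P^e}(f)$, so the write yields $a_m(f)\,a_{P^e}(f)=a_N(f)$; and if $a_m(f)=0$ the guard skips it, leaving $a_N$ at its initial value $0=a_N(f)$. The base cases $N=1$ and $N$ a prime power are immediate.

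For the cost in $R$ I would argue as follows. Inverting $F_p(f,T)$, whose degree is at most the fixed degree $d$ of the Euler factors, to precision $d_p=\lfloor\log n/\log p\rfloor$ through the division-free recurrence \eqref{eq:hrec} takes $O(\min(e,d))$ ring operations for the coefficient of index $e$, hence $O(d\,d_p)$ for the prime $p$; since $\sum_{p<n}d_p=\sum_{e\ge 1}\pi(n^{1/e})=\pi(n)+O(\sqrt n)=O(n/\log n)$, all the inversions together use $O(n)$ ring operations. The innermost loop does exactly one multiplication per executed write and no ring additions, and by clause \emph{(a)} each index in $\{2,\dots,n-1\}$ is written at most once while $1$ is never written, so the propagation performs strictly fewer than $n$ multiplications; this gives the $O(n)$ bound on operations in $R$. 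For the machine-integer cost: for fixed $p$ and $e$ the cofactor loop ranges over the $O(n/p^e)$ positive integers $m<n/p^e$, maintaining $p^e m$ by additions and performing per step the comparison $p^e m<n$ and the zero-test $a_m\neq 0$ in $O(1)$ work, so $O(n/p^e)$ in all; summing over $e\ge 1$ gives $O(n/(p-1))$, and then $\sum_{p<n}O(n/p)=O(n\sum_{p<n}1/p)=O(n\log n)$. Enumerating the primes and computing the $d_p$ costs $O(n)$ more and is dominated, so the machine-integer additions and comparisons number $O(n\log n)$.

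The only point I expect to need real care is clause \emph{(a)}: the seemingly innocuous test $a_m\neq 0$ is precisely what reconciles the overwriting scheme with correctness and, at the same time, keeps the number of ring multiplications below $n$. The remaining estimates — power-series inversion, the geometric sum over exponents, the sum $\sum_{p<n}1/p$ — are entirely routine.
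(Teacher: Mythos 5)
Your proof is correct and follows the same approach the paper only sketches in the discussion preceding the lemma: $O(n/p)$ sieve steps per prime give the $O(n\log n)$ integer-operation bound, each composite index receives exactly one ring multiplication while processing its largest prime factor, and the Euler-factor inversions amount to $\sum_{e\ge1}\pi(n^{1/e})=O(n/\log n)$ ring operations. The main thing you add is to make rigorous, via the induction on $N$ and the role of the $a_m\neq 0$ guard, why each cell is written exactly once and at the correct time — the paper merely alludes to this with ``we wait until $p$ is the largest prime divisor,'' so your version is the same argument carried out in full.
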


We can get rid of this unnecessary $\log(n)$ complexity factor.
A simple idea is to precompute for each non prime power index $k<n$
a factorization $k=uv$ into coprime numbers $u,v>1$,
and use it to compute Euler expansions with $O(n)$ operations.

\begin{algorithm}
  \caption{Euler product $\prod F_p(p^{-s}) \to \sum a_n n^{-s}$ with precomputed indices}
    \label{algo:euler-precomp}
  \begin{algorithmic}
      \Input Polynomials $F_p(x)\in 1+xR[x]$ indexed by primes $p<n$ \\
             $P$ set of all prime numbers $p\leq n$\\
             $K$ set of factorizations $k=uv$ into coprime $u,v>0$,\\
             \quad indexed by increasing values of non-primepower $k<n$.
      \Output coefficients $a_1,\dots a_n\in R$
      \Statex
      \Procedure{EulerPrecomp}{$n,P,K,(F_p(x))_{p}$}
      \State $a_1=1$
      \For{primes $p\gets P$}
        \State $d\gets \floor{\log(n)/\log(p)}$
        \State Expand $F_p(x)^{-1}=1+\sum_{e=1}^d c_e x^e + O(x^{d+1})$
        \State $(a_p,a_{p^2},\dots a_{p^d})\gets (c_1,c_2,\dots c_d)$
      \EndFor
      \For{composite $k=p^e\times m \gets K$}
        \State $a_{k}\gets a_{p^e}a_m$
      \EndFor
      \State \Return $(a_k)_{1\leq k\leq n}$
    \EndProcedure
  \end{algorithmic}
\end{algorithm}

\subsubsection{Coprime factorizations}

We now describe how to obtain a list of coprime factorizations.
Algorithm \ref{algo:prodeuler} makes it possible to produce such a list 
using $O(n\log n)$ operations: this is sufficient for practical
purposes.

But in order to obtain a clean complexity statement for Euler product
expansion we reduce the complexity to $O(n)$. The strategy we propose
is based on the following observation.
\begin{itemize}
    \item Algorithm \ref{algo:prodeuler} produces
        factorizations of the form $k=p^em$ with $p$
        the \emph{largest} prime factor of $k$.
        We call it the \emph{smooth} sieve, since
        $m$ is chosen to be $p$-smooth.
    \item the unnecessary operations are concentrated on
        small primes $p$ when there are few $p$-smooth numbers.
        However improving on this is not
        enough since processing all primes $p>\sqrt{n}$
        (for which cofactors $m<n/p$ are automatically $p$-smooth)
        already contributes $O(n\log n)$ operations.
    \item 
      we can improve things by trying to maintain an efficient
      structure of $p$-smooth integers to use for each prime $p$.
      We can obtain good timings in practice but it is not easy to
        get rid of the $\log(n)$ complexity factor: in fact the
        set of smooth integers is a growing set, and there is no
        data structure allowing fast insertion and fast ordered iteration.
\end{itemize}

Hence we reverse the perspective and decide to spend
more time on small primes.

\begin{defn}\label{def:rough}
Let $k>1$ be an integer. We call a factorization
$k = p^e\times m$ where $p$ is prime and $p\nmid m$
\begin{itemize}
  \item
  the \emph{smooth}-coprime decomposition of $k$
  if $m$ is $p$-smooth, i.e. its prime divisors
  are greater than $p$
\item the \emph{rough}-coprime decomposition of $k$
  if $m$ is $p$-rough, i.e. its prime divisors are
  smaller then $p$
\end{itemize}
\end{defn}

If we focus on \emph{rough}-coprime decompositions
we need to update the list of \emph{rough} numbers: this is a
decreasing set, which is easy to keep sorted via a doubly linked list.

This is what Algorithm \ref{algo:coprime} does: we initialize
the list of $2$-rough (i.e. odd) numbers $m\leq n/2$.
\[\begin{tikzcd}[column sep=small,row sep=small]
    1 & 3 & 5 & 7 & 9 & 11 & 13 & 15 & 17 & \dots
	\arrow[shift left, from=1-1, to=1-2]
	\arrow[shift left, from=1-2, to=1-1]
	\arrow[shift left, from=1-2, to=1-3]
	\arrow[shift left, from=1-3, to=1-2]
	\arrow[shift left, from=1-3, to=1-4]
	\arrow[shift left, from=1-4, to=1-3]
	\arrow[shift left, from=1-4, to=1-5]
	\arrow[shift left, from=1-5, to=1-4]
	\arrow[shift left, from=1-5, to=1-6]
	\arrow[shift left, from=1-6, to=1-5]
	\arrow[shift left, from=1-6, to=1-7]
	\arrow[shift left, from=1-7, to=1-6]
	\arrow[shift left, from=1-7, to=1-8]
	\arrow[shift left, from=1-8, to=1-7]
	\arrow[shift left, from=1-8, to=1-9]
	\arrow[shift left, from=1-9, to=1-8]
	\arrow[shift left, from=1-9, to=1-10]
	\arrow[shift left, from=1-10, to=1-9]
\end{tikzcd}\]
They are used to define all rough decompositions $k=2^em$.
Then $1$ links to next prime $p=3$, and we unlink
all multiples $k=3^em$ as soon as they are produced
so that $m$ is always prime to $p$. At the end of the loop
for $p=3$ the list links $3$-rough numbers (all numbers $\pm1\bmod 6$).
\[\begin{tikzcd}[column sep=small,row sep=small]
      & 3 &   &   & 9 &    &    & 15 &    & \dots\\
    1 &   & 5 & 7 &   & 11 & 13 &    & 17 & \dots
	\arrow[from=1-2, to=2-1]
	\arrow[from=1-2, to=2-3]
	\arrow[from=1-5, to=2-4]
	\arrow[from=1-5, to=2-6]
	\arrow[from=1-8, to=2-7]
	\arrow[from=1-8, to=2-9]
	\arrow[shift left, from=2-1, to=2-3]
	\arrow[shift left, from=2-3, to=2-1]
	\arrow[shift left, from=2-3, to=2-4]
	\arrow[shift left, from=2-4, to=2-3]
	\arrow[shift left, from=2-4, to=2-6]
	\arrow[shift left, from=2-6, to=2-4]
	\arrow[shift left, from=2-6, to=2-7]
	\arrow[shift left, from=2-7, to=2-6]
	\arrow[shift left, from=2-7, to=2-9]
	\arrow[shift left, from=2-9, to=2-7]
\end{tikzcd}\]
Now $1$ links to $p=5$ (the first $3$-rough number)
and we go on producing rough decompositions and unlinking the
non $p$-rough numbers.
\[\begin{tikzcd}[column sep=small,row sep=small]
      & 3 &   &   & 9 &    &    & 15 &    & \dots\\
      &   & 5 &   &   &    &    &    &    & \dots\\
    1 &   &   & 7 &   & 11 & 13 &    & 17 & \dots
	\arrow[from=1-2, to=3-1]
	\arrow[from=1-2, to=2-3]
	\arrow[from=1-5, to=3-4]
	\arrow[from=1-5, to=3-6]
	\arrow[from=1-8, to=3-7]
	\arrow[from=1-8, to=3-9]
	\arrow[from=2-3, to=3-1]
	\arrow[from=2-3, to=3-4]
	\arrow[shift left, from=3-1, to=3-4]
	\arrow[shift left, from=3-4, to=3-1]
	\arrow[shift left, from=3-4, to=3-6]
	\arrow[shift left, from=3-6, to=3-4]
	\arrow[shift left, from=3-6, to=3-7]
	\arrow[shift left, from=3-7, to=3-6]
	\arrow[shift left, from=3-7, to=3-9]
	\arrow[shift left, from=3-9, to=3-7]
\end{tikzcd}\]
\begin{algorithm}
  \caption{Sieve for rough-coprime decompositions}
    \label{algo:coprime}
  \begin{algorithmic}
    \Input length $n$
    \Output
      \State list $P$ of all primes $p<n$
      \State list $K$ of non-trivial \emph{rough}-coprime
             decompositions $k = p^em$ for $k<n$
    \Statex
    \Procedure{rough-coprime}{$n$}
    \State $K\gets \varnothing$
    \State $P\gets \varnothing$
    \State $R=\set{1,\dots n}$ \Comment{ linked list of rough numbers }
    \For { $k\gets 1, 3, \dots n$ } \Comment{ Initialize linked list on odd integers $\leq n$ }
       \State set links $k_+\gets k+2$ and $k_-\gets k-2$
    \EndFor
    \While {link $1_+ < n$} \Comment{ Generate }
      \State $p\gets 1_+$ \Comment{first rough is a prime}
      \State store $P \gets P \cup \set{p^e}$
      \For{ powers $p^e<n$ } \Comment{ unlink all $p^e$ }
        \State $(p^e_-)_+\gets p^e_+$
        \State $(p^e_+)_-\gets p^e_-$
      \EndFor
      \For{ powers $p^e<n$ }
        \State $m\gets p_+$ \Comment{ smallest $p$-rough }
        \While{ $m_+ \leq n/p^e$}
          \State $k\gets p^em$
          \State store $K \gets K \cup \set{ k=p^e\times m}$
          \State $(k_+)_-\gets k_-$ \Comment{ unlink $k$ }
          \State $(k_-)_+\gets k_+$
        \EndWhile
      \EndFor
    \EndWhile
    \State \Return $P,K$
    \EndProcedure
  \end{algorithmic}
\end{algorithm}

Since we unlink an element at each step and unlinked elements
are no longer considered the algorithm does $O(n)$ operations.
\begin{prop}\label{prop:coprime}
    Algorithm \ref{algo:coprime} computes all prime numbers
    $p<n$ and the \emph{rough}-coprime
    decomposition of all non primepower $k<n$ using $O(n)$ operations.
\end{prop}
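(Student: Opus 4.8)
The plan is to prove a loop invariant describing the doubly linked list $R$ at the start of each pass of the outer \textbf{while} loop, deduce from it that the passes enumerate exactly the primes $<n$, obtain from the same bookkeeping that $K$ collects every non-primepower once with the right decomposition, and finally bound the running time by an amortised count of list deletions. Write $2=p_1<p_2<p_3<\dots$ for the primes and claim: at the start of the pass processing $p_j$ (taking $j\ge 2$; the pass for $p_1=2$ is encoded by the initialisation to odd integers, together with recording $2\in P$ and the decompositions $k=2^e m$ with $m$ odd $>1$), the list $R$ consists of $1$ followed, in increasing order, by every integer $m\in(1,n]$ whose least prime factor is $\ge p_j$. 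The base case is the initialisation; for the inductive step I would spell out exactly what the pass for $p=p_j$ deletes.

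That pass first unlinks the prime powers $p^e<n$ — present in $R$ by the invariant, since their least prime factor is $p$ — and records $p\in P$; then for each such $e$ it walks $R$ from $p_+$ and, for every surviving $m>1$ with $p\nmid m$ and $p^e m<n$, appends $k=p^e\times m$ to $K$ and unlinks $k=p^e m$. Over all $e\ge 1$, the elements deleted in the pass are precisely the multiples of $p_j$ that were in $R$: such a multiple, written $p^e m$ with $p\nmid m$, has $m$ coprime to $p$ with least prime factor $\ge p_j$, hence $m=1$, or $m$ a prime power, or $m\in R$ with $p^e m<n$. So after the pass $R$ lists the integers with least prime factor $\ge p_{j+1}$ — the invariant for the next pass. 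Granting this, $1_+$ at the start of a pass is the least survivor $>1$, which cannot be composite (a proper factor would be a smaller survivor), hence equals $p_j$; the passes therefore enumerate $p_2,p_3,\dots$, and with $2$ we get all primes $<n$, the loop halting exactly when $1_+\ge n$.

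Correctness of $K$ follows from the same picture. If $k<n$ is not a prime power, let $p$ be its least prime factor and $k=p^e m$ with $p\nmid m$; then $m>1$, every prime factor of $m$ exceeds $p$, so $m\in R$ at the start of the pass for $p$, and being coprime to $p$ it survives the earlier deletions of that pass, so the inner loop for this $e$ reaches $m$ and records $k=p^e\times m$. This is recorded only here: a larger prime $q\mid k$ cannot record it, because the would-be cofactor $k/q^{v_q(k)}$ is still divisible by $p<q$ and was deleted when $p$ was processed, hence unavailable when $q$ runs; and inside the pass for $p$ only $e=v_p(k)$ gives an integral cofactor coprime to $p$. Prime powers only enter $P$, and $1$ is never unlinked, so $K$ is exactly the non-primepowers $k<n$, each once, with the rough-coprime decomposition of Definition~\ref{def:rough}, i.e. the one in which $p$ is the least prime factor of $k$.

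For the complexity: the initialisation is $O(n)$; each integer of $\set{1,\dots,n}$ is unlinked at most once and never reinserted, so all pointer surgery totals $O(n)$; each iteration of an inner \textbf{while} loop does one append to $K$ and one unlink, so there are at most $\abs{K}\le n$ of them; and the fixed per-$(p,e)$ cost (forming $p^e$ and $n/p^e$, reading $p_+$) is paid once per prime power $<n$, of which there are $O(n/\log n)$. As all arithmetic is on machine-size integers, the total is $O(n)$. The point I expect to need the most care is soundness of the inner traversal: the composites $k=p^e m$ being unlinked are divisible by $p$ while the pending cofactors $m$ are coprime to $p$, so the two families are disjoint and unlinking a $k$ never removes an $m$ still to be visited; since a doubly linked list stays correctly ordered under deletion, the walk from $p_+$ meets each eligible cofactor exactly once, which is what makes $K$ both complete and duplicate-free.
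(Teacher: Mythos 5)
Your proof is correct and considerably more thorough than the paper's, which disposes of the proposition in a single sentence (``Since we unlink an element at each step and unlinked elements are no longer considered the algorithm does $O(n)$ operations'') that addresses only the running time and says nothing about why the output is actually the list of primes and of rough-coprime decompositions. You supply the loop invariant (``$R$ consists of $1$ and the integers in $(1,n]$ with least prime factor $\ge p_j$''), show the first survivor is prime, establish that each non-primepower $k<n$ is recorded exactly once with $p$ equal to its least prime factor, and note the key soundness point that the unlinked composites $p^e m$ (divisible by $p$) never collide with the pending cofactors $m$ (coprime to $p$), so the traversal is never disrupted. The amortised complexity argument — initialisation is $O(n)$, each element is unlinked at most once, and the per-prime-power overhead is $O(n/\log n)$ — is the same idea the paper gestures at. One small remark: the paper's Definition~\ref{def:rough} as printed swaps the descriptions of ``$p$-smooth'' and ``$p$-rough'' cofactors; you read through this typo correctly (rough-coprime means $p$ is the \emph{least} prime factor and $m$'s prime divisors exceed $p$), which is the only reading consistent with Algorithm~\ref{algo:coprime} and the surrounding discussion.
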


\subsubsection{Proof of Theorems \ref{thm:eulercomplexity} and \ref{thm:anoperations}}

Let $\pi(x)$ denote the prime number counting function;
the number of primepowers less
than $n$ is $\sum_{e\geq 1}\pi(n^{\frac 1e})$.

If we do $e$ arithmetic operations to compute
a coefficient $a_{p^e}(f)$, we obtain a total of
$\sum_{e\geq 1} e\pi(n^{\frac1e})$
operations for all primepower indices.

The following lemma shows that this complexity is always
absorbed by the main $O(n)$ term.

\begin{lem}\label{lem:boundpi}
    For all $n\geq 1$ and all exponent $r\geq0$ we have
    \[
        \sum_{e\geq 1} e^r \pi(n^{\frac1e}) = O(n).
    \]
    We also have the refined inequalities
    \[
        \begin{cases}
        \forall n \geq 2,
                  \sum_{e\geq 2} (e-2)\pi(n^{\frac1e}) < \pi(n) \\
        \forall n \geq 137, 
                  \sum_{e\geq 2} (e-1)\pi(n^{\frac1e}) < \pi(n) \\
        \end{cases}.
    \]
\end{lem}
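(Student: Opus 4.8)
The plan is to separate the qualitative $O(n)$ bound, which follows from entirely elementary estimates on $\pi$, from the two explicit inequalities, which are near-sharp and therefore call for a combination of effective prime-counting bounds and a finite verification.

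For the first statement, note that $\pi(n^{1/e}) = 0$ as soon as $n^{1/e} < 2$, i.e. as soon as $e > \log_2 n$, so the sum $\sum_{e\ge 1} e^r\pi(n^{1/e})$ has at most $\lfloor \log_2 n\rfloor$ nonzero terms. Isolating the term $e=1$ and using $\pi(m)\le m$ gives
\[
    \sum_{e\ge 1} e^r\pi(n^{1/e})
    \;\le\; \pi(n) + \sum_{e=2}^{\lfloor\log_2 n\rfloor} e^r\, n^{1/e}
    \;\le\; n + (\log_2 n)^{\,r+1}\sqrt n \;=\; O(n),
\]
since $(\log n)^{r+1}\sqrt n = o(n)$ for fixed $r$. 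The same computation shows the tail $\sum_{e\ge 2} e^r\pi(n^{1/e})$ is $O(\sqrt n\,(\log n)^{r+1})=o(n)$, which is the shape the refined inequalities sharpen.

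For the refined inequalities, the structural remark is that the factor $e-2$ kills the $e=2$ term of the first sum, so it equals $\sum_{e\ge 3}(e-2)\pi(n^{1/e})$, whose largest term is merely $\pi(n^{1/3})=O(n^{1/3})$; the second sum $\sum_{e\ge 2}(e-1)\pi(n^{1/e})$ is instead dominated by $\pi(\sqrt n)=O(\sqrt n)$, which is why the two thresholds ($2$ versus $137$) are so different. On the right, the Rosser lower bound gives $\pi(n) > n/\log n$ for $n\ge 17$; on the left, bounding the proper-power terms by the Rosser–Schoenfeld upper bound $\pi(x) < 1.25506\,x/\log x$ (and by $\pi(x)\le 4$ when $x<8$), one sums the resulting series $\sum_{e\ge 3}(e-2)\,e\,n^{1/e}/\log n$ and $\sum_{e\ge 2}(e-1)\,e\,n^{1/e}/\log n$ and checks both are below $n/\log n$ for all $n$ beyond an explicit threshold $N_0$. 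On the remaining finite range $[2,N_0]$ (respectively $[137,N_0]$) the inequalities are settled by tabulating $\pi$ at the $O(\log n)$ relevant arguments $n^{1/e}$.

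The one genuinely delicate point is bringing the threshold down to the sharp values $2$ and $137$: the second inequality is essentially tight — for several $n<137$ (e.g. $n=128$ and $n=136$, where the $e=7$ term $6\,\pi(n^{1/7})$ first becomes active) the left-hand side meets or exceeds $\pi(n)$, and it drops below $\pi(n)$ only from $137$ on, the value at which $\pi$ jumps. Crude asymptotics leave a finite window (of size anywhere from a few hundred to a few times $10^4$, depending on how carefully the $e\ge 3$ contributions are summed) that has to be closed either by refining the estimate on $\pi(\sqrt n)$ and the subsequent terms or, equivalently and trivially, by extending the computer verification. Either route is routine; it is this tightness near $n=137$ that forces the two-part "estimate plus finite check" structure rather than a single closed-form inequality.
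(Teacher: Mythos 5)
Your proposal follows essentially the same route as the paper: the $O(n)$ bound isolates $e=1$ and notes there are only $O(\log n)$ nonzero terms in the tail, and the refined inequalities combine the explicit Rosser--Schoenfeld bounds on $\pi(x)$ for large $n$ with a finite numerical check for small $n$. The paper's only cosmetic deviation is that it bounds the tail uniformly by a polylogarithmic multiple of $\pi(n^{1/3})$ (resp.\ $\pi(\sqrt n)$ plus such a term) rather than summing the individual $\pi(n^{1/e})$ estimates, and its finite verification runs up to $n<3000$.
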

\begin{proof}
    Since $\pi(n^{\frac1e}=0$ for
    $e>E=\log_2(n)=\frac{\log(n)}{\log(2)}$,
    we take a rough upper bound
    \[
     \begin{aligned}
      \sum_{e\geq 1} e^r\pi(n^{\frac1e})
         &\leq \pi(n)+\sum_{e=2}^E e^r \pi(n^{\frac1e})\\
         &\leq \pi(n) + E^{r+1}\pi(\sqrt{n})
         = O(n) + O(\log(n)^{r+1}\sqrt{n}) = O(n)
      \end{aligned}
    \]

    We refine a little bit for the explicit inequalities:
    we first check numerically
    that they hold on the indicated range for small values $n<3000$.

    For the first inequality we write
    \[
      \sum_{e\geq 3} (e-2)\pi(n^{\frac1e})
      \leq \sum_{e=3}^E (e-2) \pi(n^{\frac13})
      \leq \frac{\log_2(n/2)\log_2(n/4)}2\pi(n^{\frac13})
    \]
    Combined with the absolute bounds \cite[3.5]{RosserSchoenfeld}
    \[
        \begin{cases}
          \forall x>1,  \pi(x) \leq 1.26\frac{x}{\log x}\\
            \forall x>17, \pi(x) \geq \frac{x}{\log(x)}
        \end{cases}
    \]
    we obtain the result since the ratio
    \[
        f(x) = 1.26 \frac{\log_2(x/2)\log_2(x/4) x^{\frac13}\log(x)}{2\log(x^{\frac13})x}
             = 1.26 (\frac32\log_2(x/2)\log_2(x/4)x^{-\frac23})
    \]
    is decreasing and $<1$ for $x=3000$.

    We adapt the argument for the second inequality, counting apart
    the first term $\pi(\sqrt{n})$.

\end{proof}

\begin{prop}
  Combined with Algorithm \ref{algo:coprime},
  Algorithm \ref{algo:euler-precomp} achieves
  the complexity of Theorem \ref{thm:eulercomplexity}.
\end{prop}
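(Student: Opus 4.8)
The plan is to assemble the pieces already in place. First I would settle the Euler-factor case of Theorem~\ref{thm:eulercomplexity}. Run Algorithm~\ref{algo:coprime} to obtain, in $O(n)$ operations (Proposition~\ref{prop:coprime}), the list $P$ of primes below $n$ and the list $K$ of rough-coprime factorizations $k=p^e\times m$ of every non-primepower $k<n$. Feed these into Algorithm~\ref{algo:euler-precomp}. The second \textbf{for} loop performs exactly one multiplication in $R$ per element of $K$, i.e. one multiplication for every non-primepower index $k<n$; since $a_m\neq 0$ is not guaranteed we may also count one comparison, but no addition is needed, so the composite part costs at most $n - 1 - \#\{\text{primepowers} <n\}$ multiplications. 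The primepower part expands $F_p(x)^{-1} \bmod x^{d+1}$ with $d=\lfloor\log n/\log p\rfloor$; by the recurrence~\eqref{eq:hrec}, computing $a_{p^e}$ from $a_{p},\dots,a_{p^{e-1}}$ and the coefficients of $F_p$ costs $e-1$ additions and $e-1$ multiplications (the leading coefficient of $F_p$ being $1$), hence at most $\sum_{p^e<n}(e-1)$ extra additions and multiplications. Adding the two contributions, the multiplication count is bounded by
\[
  \bigl(n-1-\textstyle\sum_{e\geq1}\pi(n^{1/e})\bigr) + \sum_{e\geq 2}(e-1)\pi(n^{1/e}),
\]
and the same bound holds for additions (the composite loop contributing none). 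By the second refined inequality of Lemma~\ref{lem:boundpi}, for $n\geq 137$ we have $\sum_{e\geq2}(e-1)\pi(n^{1/e}) < \pi(n) = \pi(n^{1/1}) \leq \sum_{e\geq1}\pi(n^{1/e})$, so the two primepower terms cancel against part of the $n-1$ and the total stays strictly below $n$; for $3<n<137$ one checks the bound directly. Index operations are $O(n)$ by Proposition~\ref{prop:coprime} together with the $O(n)$ work spent walking primepowers (whose number is $O(\sqrt n)$ by Lemma~\ref{lem:boundpi} with $r=0$). This proves the first assertion.

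For the Newton-sums variant, the only change is that we must first recover, for each prime $p$, the coefficients $\sigma_i(F_p)$ (equivalently the $h_i = a_{p^i}$) from the given $N_\ell(F_p)$ for $\ell$ up to $\lfloor\log n/\log p\rfloor$. By the recurrence~\eqref{eq:Nrec}, $k h_k = \sum_{j=1}^{k} N_j h_{k-j}$, each $h_k = a_{p^k}$ costs $k-1$ additions, $k-1$ multiplications and one scalar division by $k$; this replaces the $e-1$ multiplications of~\eqref{eq:hrec} by a comparable count, but now the per-primepower cost is $\sim e$ rather than $e-1$. The relevant bookkeeping term becomes $\sum_{e\geq2}(e-1)\pi(n^{1/e})$ for the "extra" multiplications beyond one per index — wait, more carefully: each coefficient $a_{p^e}$ with $e\geq 1$ is obtained with at most $e$ multiplications, so the primepower multiplications total $\sum_{e\geq1}e\,\pi(n^{1/e})$, and the composite ones total $n-1-\sum_{e\geq1}\pi(n^{1/e})$; summing, the excess over $n$ is $\sum_{e\geq1}(e-1)\pi(n^{1/e}) = \sum_{e\geq2}(e-1)\pi(n^{1/e})$, which is $<\pi(n)$ for $n\geq137$ — so in fact the same threshold works, but to be safe with the additional additions and the scalar divisions (which we may wish to absorb) the statement quotes $n\geq150$; I would simply verify numerically that the combined count stays below $n$ on $137\leq n<150$ as well, or invoke the first inequality $\sum_{e\geq2}(e-2)\pi(n^{1/e})<\pi(n)$ with a little slack.

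Finally, Theorem~\ref{thm:anoperations} follows by composition. Given the first $n$ coefficients of each $f_i$, for every prime $p<n$ we read off the local Euler factor $F_p(f_i,T) \bmod T^{d_i+1}$ where $d_i = \lfloor\log n/\log p\rfloor+1$ suffices (the needed precision at $p$ is $\lfloor\log_p n\rfloor$); extracting these factors from the coefficients $a_{p^e}(f_i)$ is the inverse direction of~\eqref{eq:hrec}, again $O(\sum_{e\geq1}e\,\pi(n^{1/e})) = O(n)$ total by Lemma~\ref{lem:boundpi}. Apply Theorem~\ref{thm:poloperations} (Proposition~\ref{prop:costsymprodpow}) prime by prime: for $\bigoplus$ and $\bigotimes$ the work at $p$ is $O(k\,\ell_p^2)$ with $\ell_p = \lfloor\log_p n\rfloor$, and $\sum_{p<n}\ell_p^2 = \sum_{e\geq1}(2e-1)\pi(n^{1/e}) = O(n)$ by the lemma, giving $O(kn) = O(n)$ for fixed $k$; for $\Sym^k f_1$ the work at $p$ is $O(k\ell_p(k+\ell_p+d))$ with $d\leq 2$ bounded (or generally $O(k^2\ell_p + k\ell_p^2)$), summing to $O(k^2 n)$. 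Then reassemble the global coefficients with the Euler-product expansion just proved, adding $O(n)$ operations. Composing the three $O(n)$ (resp. $O(k^2 n)$) stages yields the claim; Theorem~\ref{thm:prodsymcost} is the special case $k=2$.

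The main obstacle is not conceptual but the bookkeeping of the explicit constant: making sure that, across \emph{both} the coefficient-input and the Newton-sum-input regimes, the primepower overhead $\sum_{e\geq2}(e-1)\pi(n^{1/e})$ (and, in the Newton case, the attendant additions and scalar divisions) is genuinely dominated by $\pi(n)$ down to the stated thresholds $n>3$ and $n\geq150$ — which is exactly what the two refined inequalities of Lemma~\ref{lem:boundpi}, plus a finite numerical check, are designed to supply.
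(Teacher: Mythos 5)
Your accounting for the Euler-factor input is correct and matches the paper's structure, though you invoke the second refined inequality of Lemma~\ref{lem:boundpi} (threshold $n\geq 137$) and patch with a numerical check on $3<n<137$; the paper instead observes that the total is $n-1-\pi(n)+\sum_{e\geq 3}(e-2)\pi(n^{1/e})$, and the \emph{first} refined inequality (valid for all $n\geq 2$) already gives $\sum_{e\geq 3}(e-2)\pi(n^{1/e})<\pi(n)$, so no separate numerical verification is needed. This is a harmless inefficiency.

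The Newton-sums case, however, has a genuine counting error. You bound the primepower work by $\sum_{e\geq 1}e\,\pi(n^{1/e})$, implicitly charging one multiplication for $a_p=h_1$ as well. But with Newton sums given, $a_p = N_1(F_p(f,T))$ is read off directly at zero cost, so the primepower total is $\sum_{e\geq 2}e\,\pi(n^{1/e})$, not $\sum_{e\geq 1}e\,\pi(n^{1/e})$ — you are over by exactly $\pi(n)$ multiplications. With your inflated count the grand total is $n-1+\sum_{e\geq 2}(e-1)\pi(n^{1/e})\geq n$ for all $n\geq 4$, and the step "the excess over $n$ is $\sum_{e\geq2}(e-1)\pi(n^{1/e})<\pi(n)$, so the same threshold works" is a non sequitur: showing the excess is $<\pi(n)$ does not show it is $<1$. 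With the corrected count the total is $n-1-\pi(n)+\sum_{e\geq 2}(e-1)\pi(n^{1/e})$, and it is precisely the saving of $\pi(n)$ free coefficients $a_p$ that lets the second inequality $\sum_{e\geq 2}(e-1)\pi(n^{1/e})<\pi(n)$ push the total strictly below $n$; this is what the paper uses. The extra paragraph you devote to Theorem~\ref{thm:anoperations} is beyond the scope of the present proposition (the paper proves it in a separate statement) and is not needed here.
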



\begin{proof}
In Algorithm \ref{algo:euler-precomp}
the propagation step to composite numbers makes exactly
one product in $R$ by non powerprime index,
hence $n-\sum_{e\geq 1}\pi(n^{\frac1e})$ products.

If we assume given the Euler factors $F_p(f,T)$ for $p\leq n$,
we have all for free the coefficients $a_p(f)=\sigma_1(F_p(f,T))$,
and we compute recursively the coefficients
$a_{p^e}(f)$ using \eqref{eq:hrec}
with $e-1$ multiplications and $e-1$ additions in
the coefficient ring $R$, resulting in a total
of $\sum_{e\geq 2} (e-1)\pi(n^{\frac1e})$ operations
for primepower coefficients.

We obtain less than $n$ products (and even less
additions) as soon as
$\pi(n) > \sum_{e\geq 3}(e-2)\pi(n^{\frac1e})$.
This inequality holds by Lemma \ref{lem:boundpi}.

When we replace the data of Euler polynomials
$F_p(f,T)$ by their Newton sums $N_e(F_p(f,T))$,
we use the recursion \eqref{eq:Nrec} instead
and do one extra (small scalar)
division by $e$ to obtain each $a_{p^e}$. This affects
only powers $e\geq 2$, and the conclusion remains thanks
to the last inequality of Lemma \ref{lem:boundpi}.
\end{proof}

\begin{prop}
  Algorithms \ref{algo:euler-precomp} with the Euler factors
  obtained from Algorithms \ref{algo:poltensor} and \ref{algo:polsympow}
  achieves the complexity of Theorem \ref{thm:poloperations}.
\end{prop}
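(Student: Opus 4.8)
The plan is to combine the two complexity ingredients already established: Algorithm \ref{algo:euler-precomp} (together with Algorithm \ref{algo:coprime}), which turns local data at each prime into all Dirichlet coefficients below $n$ in $O(n)$ operations, and Proposition \ref{prop:costsymprodpow}, which bounds the cost of producing that local data. Fix the target $g$, which is one of $\bigoplus_i f_i$, $\bigotimes_i f_i$, or $\Sym^k f_1$; by \eqref{eq:serhk} the coefficients $a_{p^e}(g)$ are exactly $h_e(F_p(g,T))$, so computing the first $n$ coefficients of $g$ amounts, by the proof of Theorem \ref{thm:eulercomplexity}, to producing for each prime $p<n$ the expansion of $F_p(g,T)^{-1}$ to precision $\ell_p := \floor{\log n/\log p}$, and then propagating to composite indices, the latter costing $O(n)$ operations in $R$ and $O(n)$ index operations. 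So it remains only to bound $\sum_{p<n} C_p$, where $C_p$ is the cost of assembling $F_p(g,T) \bmod T^{\ell_p+1}$ from the local factors $F_p(f_i,T)$ (which are either given, or — in the coefficient formulation of Theorem \ref{thm:anoperations} — recovered from the known coefficients by the cheap conversions of Lemma \ref{lem:costconvert}, using that the degrees $d_i$ are fixed).

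Next I would read off $C_p$ from Proposition \ref{prop:costsymprodpow} with precision $\ell = \ell_p$. For the direct sum, $F_p(\bigoplus_i f_i,T) = \prod_i F_p(f_i,T)$, so truncating this product of bounded-degree factors gives $C_p = O(k\ell_p) \le O(k\ell_p^2)$; for the tensor product, Algorithm \ref{algo:poltensor} gives $C_p = O(k\ell_p^2)$; and for the symmetric power, Algorithm \ref{algo:polsympow} gives $C_p = O(k\ell_p(k+\ell_p+d)) = O(k^2\ell_p + k\ell_p^2)$, the term $k\ell_p d$ being absorbed because $d = \deg F_p(f_1,T)$ is a fixed constant. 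Recovering the Newton sums of the $F_p(f_i,T)$ that these procedures require — up to index $\ell_p$ in the first two cases, up to $k\ell_p$ in the last — costs $O(k\ell_p)$ per prime by the bounded-degree linear recurrence, and is absorbed as well.

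It then remains to bound $\sum_{p<n}\ell_p$ and $\sum_{p<n}\ell_p^2$. Grouping primes by the value of $\ell_p$ — note $\ell_p = e$ precisely when $n^{1/(e+1)} < p \le n^{1/e}$ — and telescoping gives, for any $r\ge 0$,
\[
  \sum_{p<n}\ell_p^{\,r} \;=\; \sum_{e\ge 1}\bigl(e^r-(e-1)^r\bigr)\,\pi\bigl(n^{1/e}\bigr),
\]
so that $\sum_{p<n}\ell_p = \sum_{e\ge1}\pi(n^{1/e})$ and $\sum_{p<n}\ell_p^2 = \sum_{e\ge1}(2e-1)\pi(n^{1/e}) \le 2\sum_{e\ge1} e\,\pi(n^{1/e})$, both $O(n)$ by Lemma \ref{lem:boundpi} (applied with $r=0$ and $r=1$). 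Summing $C_p$ over $p<n$ therefore gives $k\cdot O(n) = O(n)$ for $\bigoplus f_i$ and $\bigotimes f_i$ (with $k$ the fixed number of factors) and $k^2\cdot O(n) + k\cdot O(n) = O(k^2 n)$ for $\Sym^k f_1$; adding the $O(n)$ propagation cost yields the complexity of Theorem \ref{thm:anoperations}, hence also of Theorem \ref{thm:prodsymcost}. The one delicate point is exactly this last summation: the small primes require the largest local precision $\ell_p$, and one must confirm that, even weighted by $\ell_p^2$, they remain dominated by the main $O(n)$ term — which is precisely the content of the telescoping identity above together with the $O(\log(n)^{r+1}\sqrt n)$ tail estimate inside the proof of Lemma \ref{lem:boundpi}.
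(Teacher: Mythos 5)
Your proof is correct and follows the paper's own route: it reduces the cost at each prime $p$ to Proposition \ref{prop:costsymprodpow} (local factor construction to precision $\ell_p=\lfloor\log n/\log p\rfloor$) plus the $O(n)$ propagation of Algorithm \ref{algo:euler-precomp}, and then absorbs the primepower overhead via Lemma \ref{lem:boundpi}. The paper's proof is considerably terser — it asserts without detail that Lemma \ref{lem:boundpi} absorbs the ``$O(\ell^2)$ complexity in the primepower degrees'' — and your telescoping identity $\sum_{p<n}\ell_p^{\,r}=\sum_{e\ge 1}\bigl(e^r-(e-1)^r\bigr)\pi(n^{1/e})$ is the precise statement that justifies this; you have also correctly read the garbled cross-reference as pointing to Theorem \ref{thm:anoperations} rather than Theorem \ref{thm:poloperations}, which matches the surrounding section.
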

\begin{proof}
Proposition \ref{prop:costsymprodpow} suffices to prove that
    we do $O(\ell^2k)$ operations to compute the first $\ell$
    coefficients of a usual product or symmetric product,
    and $O(\ell^2k^2)$ operations for a symmetric power
    (in fact the complexity is only $O(\ell)$ to compute each
    coefficient of index $\ell$ if computed recursiverly).

    Lemma \ref{lem:boundpi} shows the $O(\ell^2)$ complexity
    in the primepower degrees is absorbed in the $O(n)$ term.
\end{proof}

\subsection{Eisenstein series}

\begin{defn}\label{defn:eis}
    Let $\phi,\psi$ be two primitive characters modulo $N_1$ and $N_2$.
    For $k\geq 1$ we consider the Eisenstein series
    \[
        E_k^{\phi,\psi} = e_k^{\phi,\psi}
        + \sum_{n\geq 1} \bigl(\sum_{d\mid n}\phi(n/d)\psi(d)d^{k-1}\bigr) q^n
    \]
    where the constant term is a value of Dirichlet L-function
    \begin{equation}
        \label{eq:e0}
        e_k^{\phi,\psi}=
        \begin{cases}
            -\frac12 L(\psi,1-k) \text{ if } N_1=1\\
            L(\chi,0)\text{ if } N_2=1 \text{ and }k=1\\
            0 \text{ otherwise}.
        \end{cases}
    \end{equation}
\end{defn}

The value of Dirichlet L-functions at negative integers
can be expressed in terms of the generalized Bernoulli numbers of Leopoldt
\[
    L(\chi,1-k) = \frac{B_{k,\chi}}k \text{ where }
    \sum_k B_{k,\chi}\frac{t^k}{k!}
    = \frac{\sum_{a=1}^N \chi(a)te^{at}}{e^{Nt}-1}.
\]

Following the principles of Section \ref{sec:4products},
we interpret the coefficients for $n\geq 1$ as a
Dirichlet convolution
\[
    a_n(E_k^{\phi,\psi})
    = \sum_{d\mid n}\phi(n/d)\psi(d)d^{k-1}
    = a_n(\phi\oplus \psi\mid_{k-1}).
\]
where $a_n(\phi)=\phi(n)$ and
$ a_n(\psi\mid_{1-k})=\psi(n)n^{k-1}$
are the coefficients of the shift
$L(\psi,s-(k-1))=\sum_n \psi(n)n^{-s+k-1}$.

Hence we obtain the coefficients of $E_k^{\phi,\psi}$
as a Euler product
\begin{equation}\label{eq:aneisenstein}
    \sum_{n\geq 1} a_n(E_k^{\phi,\psi}) n^{-s}
  =
    L(\phi,s)L(\psi,s+1-k)
  =
  \prod_p \frac1{(1-\phi(p)p^{-s})(1-\psi(p)p^{k-1-s})}.
\end{equation}

We detail the procedure in Algorithm \ref{algo:mfeiscoefs},
where we use the fact that the complete sums associated to
$P(T)=(1-\phi(p)T)(1-p^{k-1}\psi(p)T)$ satisfy the recurrence
\[
    h_{e}(P) = (p^{k-1}\psi(p)) h_{e-1}(P) + \phi(p)^e
\]
and $\phi(p)^e = \phi(p^e\bmod N)$ to obtain the number
of multiplications and additions
advertised in Theorem \ref{thm:eiscost}.

\begin{algorithm}
  \label{algo:mfeiscoefs}
  \caption{Eisenstein coefficients via Euler product}
  \begin{algorithmic}
    \Input characters $\phi,\psi$ of modulus $N$, weight $k$, length $n$
    \Output coefficients $a_1(E_k^{\phi,\psi}),\dots a_n(E_k^{\phi,psi})$
    \Statex
    \Procedure{Eisenstein}{$k,\phi,\psi,n$}
    \State compute $v^\phi = (\phi(0),\phi(1),\dots \phi(N-1))\in R$
    \State compute $v^\psi = (\psi(0),\psi(1),\dots \psi(N-1)) \in R$
    \State $P,K\gets$ \Call{rough-coprime}{$n$} \Comment{precompute indices}
    \State $a_1 = 1$
    \For{primes $p\gets P$}
      \State $c_p \gets p^{k-1}v^\psi(p\bmod N_\psi)$
      \State $a_{p} \gets c_p + v^{\phi}(p\bmod N_\phi)$
      \For{ primepower $p^e < n$ }
        \State $a_{p^e} \gets a_{p^{e-1}} c_p + v^{\phi}(p^e\bmod N_\phi)$
      \EndFor
    \EndFor
    \For{composite $k=p^e\times m \gets K$}
      \State $a_{k}\gets a_{p^e}a_m$
    \EndFor
      \State \Return $(a_k)_{1\leq k\leq n}$
    \EndProcedure
  \end{algorithmic}
\end{algorithm}

Algorithm \ref{algo:mfeiscoefs} proves Theorem \ref{thm:eiscost}, since:
\begin{itemize}
    \item we do less than $o_\phi+o_\psi$ multiplications to
        compute the character images and $O(N_\phi+N_\psi)$
        operations to initialize the
        tables of Dirichlet values
    \item we do $O(n)$ operations to compute our
        powerprime and composite tables using Algorithm \ref{algo:coprime}.
    \item we do exactly one multiplication and at most one addition
        per coefficient $a_k$.
\end{itemize}
When $k=1$, we even skip the $\pi(n)$ multiplications by $p^{k-1}$.

\section{Coefficients of modular forms}

Let $k\geq 2$ and $f\in S_k(\Gamma_1(N))$ be an eigenform,
we consider the problem
of computing its Fourier coefficients at the cusp $\infty$
\[
    f(z) = \sum_{n\geq 1}a_n(f) q^n.
\]

\subsection{Borisov-Gunnels expression}

\newcommand\B{{\vert B}}

We use the notations of \cite{DicksonNeururer}: for $f\in M_k(\Gamma_1(N))$,
in particular the expansion operator $f\B_d(z) = d^{\frac k2}f(dz)$.

We assume given an explicit decomposition in terms of Eisenstein series
\[
    f =
    \sum_{i=1}^r c_i
    \bigl(E_{\ell_i}^{\chi_i,\psi_i}\B_{d_1}\bigr)
    \bigl(E_{k-\ell_i}^{\chi_i,\psi_i} \B_{d_2}\bigr)
\]
where by abuse of notation we allow terms involving $E_0=1$:
they correspond to elements in the Eisenstein subspace
$\mathcal E_k\subset M_k(N)$.

We do not enter the problem of finding such a decomposition:
there is a rich literature on the theoretical side asserting that
certain families are sufficient to generate $M_k(\Gamma_1(N))$
if $k>2$. For instance we quote the following
from \cite[theorem 5.4]{DicksonNeururer}.

\begin{thm}\label{thm:borisov-gunnels}
  Let $k\geq 4$ and assume that $N$ divided by its squarefree part has at most two prime factors.
  Then the vector space $M_k(\Gamma_0(N))$ is generated by
  its Eisenstein subspace $\mathcal E_k(N)$ and the products
  \[
      E_\ell^{\phi,\psi}\mid B_{d_1 d}\cdot
      E_{k-\ell}^{\phi^{-1},\psi^{-1}}\mid B_{d_2d}
  \]
  such that
  $1\leq \ell\leq k-1$,
  $\phi$ and $\psi$ are primitive characters modulo $N_1$
          and $N_2$ with $\phi\psi(-1)=(-1)^\ell$,
  $d_1M_1$ divides the squarefree part of $N$ and
  $N_1d_1N_2d_2d\mid N$.
\end{thm}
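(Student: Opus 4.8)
Since Theorem~\ref{thm:borisov-gunnels} is quoted verbatim from \cite{DicksonNeururer}, my goal here is only to lay out the architecture of a proof of such a Borisov--Gunnels generation statement. The first step is the standard splitting $M_k(\Gamma_0(N)) = \mathcal E_k(N)\oplus S_k(\Gamma_0(N))$: the Eisenstein subspace is spanned by the weight-$k$ series $E_k^{\phi,\psi}\B_d$ with $\phi\psi$ trivial and $N_1N_2d\mid N$, and these are already exhibited by the statement as the separate summand $\mathcal E_k(N)$ (equivalently, as the degenerate $\ell\in\{0,k\}$ members of the product family, where one factor is $E_0=1$). It then remains to prove that the genuine products $E_\ell^{\phi,\psi}\B_{d_1 d}\cdot E_{k-\ell}^{\phi^{-1},\psi^{-1}}\B_{d_2 d}$ with $1\leq\ell\leq k-1$ span $S_k(\Gamma_0(N))$ modulo $\mathcal E_k(N)$.

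For the cuspidal part I would take as a black box the core ``toric forms'' theorem of Borisov and Gunnels, in which weight-one Eisenstein series are indexed combinatorially and their products are governed by explicit linear relations; its upshot is that for $k\geq 3$ the monomials $E_1^{\chi_1,\psi_1}\cdots E_1^{\chi_k,\psi_k}$ (each factor subject to the parity $\chi_i\psi_i(-1)=-1$ needed for non-vanishing) span $M_k(\Gamma_1(N))$. The analytic mechanism behind this input is a Rankin--Selberg unfolding: a cusp form Petersson-orthogonal to all such products is orthogonal, after unfolding, to a family of twisted $L$-values $L(f\otimes\chi,s_0)$ at a critical point, and the non-vanishing of these values forces $f=0$. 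I would not reprove this; it is classical and orthogonal to the algorithmic content of the present paper.

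The genuinely delicate step — and the one I expect to be the main obstacle — is the passage from products of $k$ weight-one series to products of exactly \emph{two} Eisenstein series of complementary weights $\ell$ and $k-\ell$. Naive grouping fails, since a partial product $E_1^{(1)}\cdots E_1^{(\ell)}$ is a general weight-$\ell$ form rather than an Eisenstein series, so one cannot rename it $E_\ell$. The idea is to rewrite a product of three (or more) Eisenstein series as a linear combination of products of two, absorbing the discrepancy by passing to a larger level and reinjecting via expansion operators; this is compatible with multiplication because $(FG)\B_d = (F\B_d)(G\B_d)$ for the $d^{k/2}$-normalised operator, and the extra freedom of applying \emph{different} shifts $\B_{d_1d}$, $\B_{d_2d}$ to the two factors enlarges the reachable family further. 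The hypothesis that $N$ divided by its squarefree part has at most two prime factors is precisely what leaves enough room in the level filtration for this rewriting to close up, and tracking the resulting divisibility constraints ($d_1N_1$ dividing the squarefree part of $N$ and $N_1d_1N_2d_2d\mid N$) is where the real bookkeeping lives.

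Finally I would descend from $\Gamma_1(N)$ to $\Gamma_0(N)$ by averaging over the diamond operators $\langle a\rangle$, $a\in(\Z/N)^\times$: only the products whose total nebentypus $\chi_1\psi_1\chi_2\psi_2$ is trivial survive the projection, which forces the second factor to carry the inverse characters $\phi^{-1},\psi^{-1}$, and the parity $\phi\psi(-1)=(-1)^\ell$ is then automatic from non-vanishing of each factor. Collecting Steps 1--4 together with all divisibility conditions accumulated along the way yields the stated generating family; the weight restriction $k\geq 4$ (rather than $k\geq 3$) is the price paid for insisting on products of exactly two factors and on $\Gamma_0(N)$.
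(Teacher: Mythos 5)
The paper itself does not prove Theorem~\ref{thm:borisov-gunnels}: the statement is quoted verbatim from \cite[Theorem 5.4]{DicksonNeururer}, and the surrounding text explicitly disclaims any originality here (``We do not enter the problem of finding such a decomposition''). So there is no internal proof to compare your sketch against; what you have written is an architecture for a proof of the cited result, and you say as much. Within that framing, your overall picture is reasonable in outline but misallocates the key analytic input. You take as a black box the original Borisov--Gunnels statement that $k$-fold products of weight-one Eisenstein series span, and then identify as ``the genuinely delicate step'' a regrouping of these $k$-fold products into products of exactly two Eisenstein series of complementary weights $\ell$ and $k-\ell$. That regrouping step is not how the proof in \cite{DicksonNeururer} goes, and you are right to be suspicious of it: a partial product of weight-one Eisenstein series has no reason to be an Eisenstein series, and there is no clean mechanism to absorb the discrepancy. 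In the actual argument the Rankin--Selberg unfolding and the non-vanishing of critical $L$-values $L(f,\ell)$ and $L(f\otimes\psi,k-1)$ are applied \emph{directly} to the Petersson pairing of a cusp form against the two-factor products $E_\ell^{\phi,\psi}\cdot E_{k-\ell}^{\phi^{-1},\psi^{-1}}$; the weight-one monomial spanning result plays no role, and the hypothesis $k\geq 4$ enters precisely to guarantee that the relevant $L$-values sit at a point of absolute convergence where non-vanishing is automatic, not as ``the price of regrouping''. Your remarks on the diamond-operator descent to $\Gamma_0(N)$, on the compatibility $(FG)\!\mid\! B_d = (F\!\mid\! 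B_d)(G\!\mid\! B_d)$, and on the role of the expansion operators and the squarefree-part hypothesis in the level bookkeeping are sensible, but the central reduction you posit is a gap: it names a step that does not occur in the proof and that you yourself correctly flag as not obviously fixable.
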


A well understood caveat is that in weight $k=2$ the space
generated by products of Eisenstein series avoids the
subspace of modular forms $f$ such that $L(f,1)=0$.
In this case we can
use a division trick employed by Belabas and Cohen: take
any weight 1 modular form $g\in S_1(\Gamma_1(N))$
and express $fg\in S_3(\Gamma_1(N))$ using Eisenstein series
(since the decomposition always exists in weight $k=3$).
This allows fast computation of coefficients of $f$ at the
cost of one extra division of power series.

In practice, when the ambiant space $S_k(N)$ is available
it suffices to introduce products of Eisenstein series
and check by linear algebra that our modular form belongs
to their span.

This functionality is already present in the source code of
Pari/GP \cite{PariGP} and used internally for computing expansions at
cusps different from $\infty$ \cite{BelabasCohen}.

\begin{ex}\label{ex:bg}
    We give here a sample list of decompositions. For Dirichlet
    characters the notation $\chi_N(a,\cdot)$ refers to the explicit
    isomorphism $(\Z/N\Z)^\times\to\widehat{(\Z/N\Z)^\times}$ introduced
    by Conrey and described in \cite{LMFDB:conrey}.
    \begin{itemize}
        \item $N=11$, $k=2$, $f_{11}=q-2q^2-q^3\dots $ the newform in $S_2(11)$,
            let $\chi=\chi_{11}(1,\cdot)$ be the trivial character modulo $11$ and
            $\psi=\chi_{11}(-1,\cdot)$ the character of order $2$, then
            \[
                f_{11} = -\frac32 E_2^{1,\chi} + \frac52 \bigl(E_1^{1,\psi}\bigr)^2
            \]
       \item $N=23$, $k=2$, $f_{23}= q + (y-1)q^2 + (1-2y)q^3 - yq^4\dots$
            the newform in $S_2(23)$ having coefficients in $\Z[y]$ where $y^2=y+1$.
            Let $\chi=\chi_{23}(1,\cdot)$ be the trivial character modulo $23$,
            $\psi=\chi_{23}(5,\cdot)$ the character
            such that $\psi(5)=\zeta$ where $\zeta$ is a $22$-th root of unity,
            and $\phi=\psi^{11}=\chi_{23}(-1,\cdot)$ the quadratic character.
            Then
            \[
                f_{23} = \frac{
                    (27c+3-9y) E_2^{1,\chi}
                    -(11c-y+3) \bigl(E_1^{1,\phi}\bigr)^2
                    +u(6y+4)
                    E_1^{1,\psi}E_1^{1,\psi^{-1}}
                }{4a-6b}
            \]
            where we write
            \[
              \begin{cases}
                a = (1-\zeta)(1+\zeta^2)(1+\zeta^4)^2 \\
                b = \zeta^3(1-\zeta)(1+\zeta^4) \\
                c = (a-1)y + b-1 \\
                u = \zeta^9 - \zeta^6 + \zeta^5 - \zeta^2 + 2 \\
              \end{cases}.
            \]
        \item $N=32$, $k=2$,
            $f_{32}=q-2q^5-3q^9+6q^{13}\dots$ the newform in $S_2(32)$.
            Let $\zeta$ a height root of unity and $\psi=\chi_{23}(3,\cdot)$
            the odd character modulo $32$ such that $\psi(5)=\zeta^3$, then
            \[
                f_{32} = (\zeta^3-\zeta+1)
                    (-E_2+E_2\B_2-8E_2\B_{16}+32E_2\B_{32})
                    +(\zeta^3-\zeta) E_1^{1,\psi} E_1^{1,\overline \psi}
            \]
            Here the cuspidal space has dimension $1$ so we manage to
            have an expression with only one product.
        \item $N=43$, $k=2$,
            $f_{43}= q + yq^2 - yq^3 + (2-y)q^5\dots$ with
            coefficients in $\Z[y]$ where $y^2=2$,
            then 
            \[
                f_{43} =
                \frac{y-1}2 E_2^{1,\chi}
                +\frac{3y+5}6 \bigl(E_1^{1,\phi}\bigr)^2
                +\frac{2-3y}3 E_1^{1,\psi}E_1^{1,\psi^{-1}}
            \]
            where $\chi=\chi_{43}(1,\cdot)$, $\phi=\chi_{43}(-1,\cdot)$,
            $\psi=\chi_{43}(7,\cdot)$.
    \end{itemize}
\end{ex}

One can compare these decompositions to those of \cite{DicksonNeururer}:
we try to minimize the number of
products and favor the use of expansion operators $B_d$
(for which we can recycle the same coefficients).
We manage to find expressions having $\dim S_k(N)$ products
instead of $\dim M_k(N)$: when $N$ is composite this makes a huge
difference.

For computational purposes we adopt the following notation.
\begin{defn}\label{defn:bgdec}
    Let $f\in S_k(\Gamma_1(N))$, a \emph{Borisov-Gunnels}
    decomposition of $f$ is a data $(k,N,\cC,\cE,\Omega,B)$ where
    \begin{itemize}
        \item $\cC$ is a list of characters $(\phi_1,\dots \phi_r)$
            modulo $N$
        \item $\cF$ is a list of $m$ tuples of the form
            $(\ell,i,j,i',j',d,d')$, each encoding a product
            $E_\ell^{\phi_i,\phi_j}\B_{d}E_{k-\ell}^{\phi_{i'},\phi_{j'}}\B_{d'}$
            (one factor being trivial if $\ell=k$)
        \item $\Omega=(\omega_1,\dots \omega_d)$ is the description
            of an integral basis of the
            ring of integers $\Z[f]$
        \item $B$ is a matrix of size $d\times m$ representing
            the coefficients of $f$ on the family $\cF$.
    \end{itemize}
\end{defn}

The example of $f_{23}$ demonstrates that the coefficients of
the decomposition usually involve large cyclotomic integers.
It is better to adopt
a representation modulo a prime $q$, so that the matrix $B$ belongs to
$M_{d\times m}(\F_q)$ for a suitable prime $q$, as we explain now.


\subsection{Modular computation}

We have Ramanujan bound on eigenform coefficients, so that a form can be
recognized, and more importantly its coefficients can be computed
working modulo a prime $q$.

More precisely we convert the Ramanujan bound into
an explicit bound on the reduced coefficients:
\begin{lem}\label{lem:hasse}
    Let $f$ be a weight $k$ eigenform, whose
    coefficients are integers in the Hecke field $E=\Q(f)$.

    Let $\omega_1,\dots \omega_d$ be an integral basis of
    $\CO_E$, let
    \[
        \Omega = \bigl(\omega_i^\sigma)_{i,\sigma}\bigr)
    \]
    be the matrix of all embeddings $\sigma:E\to\C$
    and denote by
    \[
        B = \max_{i,j}\abs{(\Omega^{-1})_{i,j}}
    \]
    the maximum coefficient of the inverse matrix.
    
    Then the coefficients of the decomposition of $a_n(f)$
    on the basis $(\omega)_i$
    \[
        a_n(f) = \sum_{i=1}^d a_{n,i}(f)\omega_i
    \]
    satisfy
    \[
        \abs{ a_{p,i} }\leq 2\sqrt{p}^{k-1}B.
    \]
\end{lem}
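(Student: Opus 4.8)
The plan is to combine Deligne's bound on Hecke eigenvalues at every archimedean place with the elementary linear algebra of inverting the embedding matrix $\Omega$; the only deep input is the Ramanujan--Petersson bound, everything else is bookkeeping.

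First I would recall that a normalized eigenform $f\in S_k(\Gamma_1(N))$ has, for every embedding $\sigma\colon E=\Q(f)\hookrightarrow\C$, a Galois-conjugate newform $f^\sigma$ of the same weight and level whose $p$-th Hecke eigenvalue is $a_p(f^\sigma)=\sigma(a_p(f))$. Deligne's theorem (a consequence of the Weil conjectures) then gives $\abs{\sigma(a_p(f))}=\abs{a_p(f^\sigma)}\le 2\sqrt{p}^{\,k-1}$ for all $\sigma$ simultaneously; in particular every coordinate of the conjugate vector $\mathbf v_p=\bigl(\sigma(a_p(f))\bigr)_\sigma\in\C^d$ is bounded by $2\sqrt{p}^{\,k-1}$.

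Next, since $\omega_1,\dots,\omega_d$ is an integral basis of $\CO_E$, the coordinate vector $\mathbf a_p=(a_{p,i})_i$ lies in $\Z^d$, and by the very definition of $\Omega=(\omega_i^\sigma)_{i,\sigma}$ we have $\mathbf v_p=\Omega^{\mathsf T}\mathbf a_p$, hence $\mathbf a_p=(\Omega^{-1})^{\mathsf T}\mathbf v_p$. Taking absolute values entrywise and applying the triangle inequality yields
\[
  \abs{a_{p,i}}\le \sum_\sigma \abs{(\Omega^{-1})_{\sigma,i}}\,\abs{\sigma(a_p(f))}
  \le 2\sqrt{p}^{\,k-1}\sum_\sigma\abs{(\Omega^{-1})_{\sigma,i}},
\]
and the statement follows by estimating the sum over the $d$ embeddings by $B$. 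The clean constant $2\sqrt{p}^{\,k-1}B$ is obtained reading $B$ as the maximal absolute column sum of $\Omega^{-1}$; bounding each summand separately by $\max_{i,j}\abs{(\Omega^{-1})_{i,j}}$ instead picks up a harmless factor $d=[E:\Q]$, which is immaterial for the modular-reduction application.

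There is no real obstacle here, but two points deserve care. First, one must use that the bound holds for \emph{every} embedding at once, which is exactly the Galois-conjugate-newform remark and is what lets us control all the integer coordinates $a_{p,i}$ simultaneously. Second, the statement is restricted to prime indices $n=p$, where the Deligne bound is in its cleanest multiplicative form; for a general $n$ one would instead feed in $\abs{a_n(f^\sigma)}\le \tau(n)\,n^{(k-1)/2}$ (with $\tau$ the number-of-divisors function), picking up a divisor-function factor — not needed here since only $a_p$ is reconstructed from its reduction modulo the FFT prime.
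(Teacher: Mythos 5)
Your argument is correct and follows the same route as the paper's terse proof: Hasse--Ramanujan at every embedding plus the linear-algebra relation $(\sigma(a_p(f)))_\sigma = (a_{p,i})_i\,\Omega$. You also rightly flag that with $B$ defined literally as $\max_{i,j}\abs{(\Omega^{-1})_{i,j}}$ the triangle inequality gives $\abs{a_{p,i}}\le 2\sqrt{p}^{\,k-1}\cdot d\cdot B$ rather than $2\sqrt{p}^{\,k-1}B$, so either $B$ should be read as the maximal absolute column sum of $\Omega^{-1}$ or the stated bound carries a hidden factor of $d=[E:\Q]$ --- a slight imprecision that the paper's one-line proof glosses over but that is harmless for the modular-reduction application.
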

\begin{proof}
    This follows from the Hasse-Ramanujan bound
    $\abs{ a_p^\sigma }\leq 2\sqrt{p}^{k-1}$
    for each embedding $\sigma:E\to\C$
    and the relation
    $(a_p(f)^\sigma)_\sigma = (a_{p,i}(f))_i \Omega$
    between embeddings and integral basis coefficients.
\end{proof}

\subsection{FFT product}

Now let $q$ be a prime chosen
such that all characters $\phi,\psi$
have values in $\F_q^\times$
(a sufficient condition  being $q\equiv 1\bmod \varphi(N)$).

Assume also that $q$ is a FFT prime, meaning $\F_q^\times$
contains a root of unity of order $2^r$ with $2^r$ at least
twice the number $n$ of coefficients we want to compute.

Then the first $n$ coefficients of the product of powers series 
$E_\ell^{\phi,\psi}E_{\ell'}^{\phi',\psi'}$ can be computed modulo
$q$ in time $O(r2^r)=O(n\log n)$.

If $q$ is large enough so that conditions of Lemma \ref{lem:hasse}
are satisfied, we can lift the coefficients $a_p(f)$
to the integer ring $\CO_E$.

Otherwise we can replicate the computation modulo other primes $q$
until the result can be lifted, using a CRT reconstruction.

\subsection{Algorithm}

Algorithm \ref{algo:mfcoefs} describes the process. We choose to
discard coefficients of non prime index: this allows to gain
a factor $\log(n)$ on the memory footprint and on the
linear algebra step to combine generators. If all coefficients
are needed we simply apply Algorithm \ref{algo:euler-precomp}
at the end.

\begin{algorithm}
    \label{algo:mfcoefs}
  \caption{Modular form coefficients}
  \begin{algorithmic}
    \Input Borisov-Gunnels decomposition $f=(k,N,\cC,\cE,\Omega,B)$, length $n$
    \Output coefficients $a_1(f),\dots a_n(f)$ on basis $\Omega$.
    \Statex
    \Procedure{ModularFormCoefficients}{$f,n$}
    \State $q\gets $ characteristic of $B$
    \Comment{All computations are done modulo $q$}
    \State $o\gets \lcm\set{\order(\chi_i), \chi_i\in\cC}$
      \For{$\chi\in\cC$}
      \State $v^\chi\gets (\chi(0),\dots \chi(N))$ \Comment{Precompute values}
      \EndFor
      \State $P,K\gets $\Call{rough-coprime}{$n$} \Comment{Precompute indices}
    \State $M \gets 0$, matrix of size $\#\Omega\times \#P$.
    \For{$(\ell,i,j,i',j',d,d')\gets \cE$}
      \State $(a_n)\gets $\Call{Eisenstein}{$\ell,\phi_i,\phi_j,n$}
      \State $(b_n)\gets $\Call{Eisenstein}{$k-\ell,\phi_{i'},\phi_{j'},n$}
      \State $(c_n)\gets$ \Call{FFT-multiplication}{$a_n,b_n$ mod $q$}
      \State $M\gets M\cup (c_p)_{p\in P}$
    \EndFor
    \State $(a_p)_p \gets B \times M$ modulo $q$
      \State lift $a_p$ in $\Z$
      \State \Return $(a_p)$
    \EndProcedure
  \end{algorithmic}
\end{algorithm}

\begin{cor}\label{cor:hasse}
    Let $f\in S_2(N)$ be a modular form given by a Borisov-Gunnels
    decomposition, and let $B=B(f)$ the bound of
    Lemma \ref{lem:hasse} associated to the integer basis.

    Let $\ell = 2^rm+1$ be a prime number, and assume
    that $B\leq \frac{\ell-1}4$. Then for all length
    $n\leq 2^{r-1}$ such that $\sqrt n B \leq \frac{\ell-1}4$,
    Algorithm \ref{algo:mfcoefs} computes
    all values $a_p(f)\in\CO_E$ for $p\leq n$ in time $O(n\log n)$.
\end{cor}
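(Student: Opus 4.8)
The plan is to assemble the pieces already established: Theorem \ref{thm:eiscost} for the Eisenstein series, the FFT product bound, Lemma \ref{lem:hasse} for the modular lifting, and Theorem \ref{thm:eulercomplexity} (via Algorithm \ref{algo:euler-precomp}) for the propagation of primepower coefficients. The corollary is essentially a packaging statement: it pins down the numerical conditions on the prime $\ell$ under which everything goes through modulo $\ell$ in a single FFT pass, and then reads off the complexity.

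First I would check that the choice of $q=\ell$ is legitimate. Since $\ell=2^rm+1$ we have $2^r\mid \ell-1$, so $\F_\ell^\times$ contains a root of unity of order $2^r$; the hypothesis $n\leq 2^{r-1}$ guarantees that the FFT length $2^r$ is at least twice the number of coefficients, so the power-series products $E_{\ell_i}^{\cdot}E_{k-\ell_i}^{\cdot}$ can be multiplied modulo $\ell$ in time $O(r2^r)=O(n\log n)$ as stated in the FFT product subsection. (One should also note $\ell\equiv 1\bmod 2^r$ together with the running assumption on $q\equiv 1\bmod\varphi(N)$ — or simply assume $\ell$ has been chosen with that additional congruence — so that all characters $\phi_i$ take values in $\F_\ell^\times$; this is implicit in the phrase ``Borisov-Gunnels decomposition'' whose matrix $B$ is already reduced mod $\ell$.)

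Next I would run through Algorithm \ref{algo:mfcoefs} line by line and account for the cost. Computing the character tables $v^\chi$ costs $O(N+\sum o_{\chi_i})$ by Theorem \ref{thm:eiscost}; \textsc{rough-coprime} costs $O(n)$ by Proposition \ref{prop:coprime}; each call to \textsc{Eisenstein} costs $O(n)$ ring operations, i.e. $O(n)$ operations in $\F_\ell$, again by Theorem \ref{thm:eiscost}; each FFT multiplication costs $O(n\log n)$; and there are $O(1)$ such products for a fixed form $f$ (the number $m$ of products in the decomposition is a constant depending only on $f$, bounded by $\dim S_k(N)$). The final matrix product $B\times M$ touches only the $\pi(n)=O(n/\log n)$ prime columns and costs $O(d\cdot m\cdot \pi(n))=O(n)$ operations in $\F_\ell$. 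Summing, the dominant term is the $O(n\log n)$ from the FFT, giving the claimed complexity.

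Finally I would justify the lift. By Lemma \ref{lem:hasse} the integral-basis coordinates $a_{p,i}(f)$ satisfy $\abs{a_{p,i}}\leq 2\sqrt p^{\,k-1}B = 2\sqrt p\, B$ for $k=2$, hence $\abs{a_{p,i}}\leq 2\sqrt n\,B$ for $p\leq n$. The hypothesis $\sqrt n\,B\leq \frac{\ell-1}4$ then gives $\abs{a_{p,i}}\leq \frac{\ell-1}2$, so each coordinate lies in a set of representatives of $\Z/\ell\Z$ symmetric about $0$ and is uniquely recovered from its reduction mod $\ell$ by choosing the representative in $(-\ell/2,\ell/2)$. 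The extra condition $B\leq\frac{\ell-1}4$ is just the $n=1$ case and is subsumed; I would mention it only to match the statement. The main (minor) obstacle is bookkeeping: making sure the constant $2$ in the bound, the factor $\sqrt p^{\,k-1}$ specialized at $k=2$, and the ``$\frac{\ell-1}4$'' versus ``$\frac{\ell-1}2$'' thresholds line up — there is a factor $2$ of slack built into the hypothesis precisely so that signed lifting works — but there is no real difficulty here, as every ingredient has already been proved.
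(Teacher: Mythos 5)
Your proof is correct and follows the same route as the paper, which condenses the argument into two lines: FFT modulo $\ell$ supports up to $2^{r-1}$ coefficients, and the reductions $a_{p,i}(f)$ lift because $2\sqrt{p}\,B \leq \frac{\ell-1}{2}$. Your version simply makes explicit the bookkeeping (costs of the Eisenstein step, \textsc{rough-coprime}, the matrix product, and the signed-lift argument) that the paper leaves implicit.
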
    
\begin{proof}
    A binary Fast Fourier Transform modulo $\ell$ allows to compute
    up to $2^{r-1}$ coefficients, and the reductions $a_{p,i}(f)$
    can be lifted to $\Z$ if $2\sqrt{p}B\leq\frac{\ell -1}2$.
\end{proof}

Due to memory issues we have limited our implementation
to lengths $n < 2^{32}$, so that for small levels
$N<1000$ it is easy to find FFT primes of 53 bits
such that $q\equiv 1\bmod \varphi(N)$
and $\equiv 1\bmod 2^{33}$.


\section{Timings}\label{sec:timings}

We reproduce in Table \ref{tab:timings} the timings obtained by our
implementation for a few example forms.
Forms of level $N=11$ and $N=23$
are listed on page \pageref{ex:bg}. Forms of level $N=41$ and $N=131$
have coefficients in fields of degree $d=3$ and $d=10$.

We also computed the expansions of the two eigenforms
of weight $k=4$ and level $N=13$, one of which is rational and the other has
coefficients in $\Q(\sqrt{17})$. Both expansions are obtained at the same time
since the forms are expressed on the same basis of Eisenstein products.

We detail the portion of time spent on computing all Eisenstein series $E_i$
and on the FFT products $E_i\cdot E_i'$.

The timings for lengths $n=10^5$ and $10^7$ are obtained as single core
computation on a laptop equipped with a M4 processor. This machine has
not enough memory to run the computation for length $n=10^9$, the
corresponding timings are obtained on a AMD Epyc server with $200GB$
of memory

Note that the algorithm is ``embarassingly parallel'' with respect to
the computation of products of Eisenstein series, which can be
processed simultaneously.

This improvement is useful when computing modular forms of larger
level. We did not implement it yet, being more concerned with
memory issues than with processing time.

\begin{table}
    \centering
    \begin{tabular}{ccccccc}
\toprule
$N$         & $k$    & $d=[\Q(f):\Q]$ & length     & Eisenstein & FFT     & total \\
\midrule
11          & 2      & 1              & $10^5$     & 3ms        & 5ms     & 12ms \\
            &        &                & $10^7$     & 94ms       & 286ms   & 507ms \\
            &        &                & $10^9$     & 36.5s      & 1mn16s  & 2mn30s \\
\midrule \\
13          & 4      & 1+2            & $10^5$     & 1ms        & 7ms     & 10ms \\
            &        &                & $10^7$     & 174ms      & 989ms   & 1275ms \\
            &        &                & $10^9$     & 2mn1s      & 5mn16s  & 8mn \\
\midrule \\
23          & 2      & 2              & $10^5$     & <1ms       & 4ms     & 5ms \\
            &        &                & $10^7$     & 109ms      & 502ms   & 711ms\\
            &        &                & $10^9$     & 55.9s      & 2mn20s  & 3mn55s \\
\midrule \\
41          & 2      & 3              & $10^5$     & 1ms        & 6ms     & 8ms\\
            &        &                & $10^7$     & 147ms      & 738ms   & 998ms\\
            &        &                & $10^9$     & 1mn59s     & 3mn38s  & 6mn17s \\
\midrule \\
131         & 2      & 10             & $10^5$     & 5ms        & 17ms    & 24ms \\
            &        &                & $10^7$     & 426ms      & 2394ms  & 3024ms \\
            &        &                & $10^9$     & 5mn20s     & 12mn31s & 18mn56s \\
\bottomrule
\end{tabular}
    \caption{Timings for our algorithm, with time spent on
the $2d+1$ Euler products and $d$ FFT multiplications.}
\label{tab:timings}
\end{table}

\subsection{Coefficients of modular forms}\label{sec:compare}

For comparison purposes we give a list of alternative timings using the
default modular form functionality of Sage, Magma and Pari/GP.
None of these packages has been developped
in the perspective of computing large q-expansions.

We consider the simplest example of the level 11
newform $f\in S_2(\Gamma_0(11))$, and detail
below various ways of computing its coefficients.
All timings are collected in Table \ref{tab:compare}.

\begin{table}
    \centering
\begin{tabular}{llccc}
    \toprule
    form                        & length & Pari/GP   & Sage & Magma \\
    \midrule
    $f\in S_2(\Gamma_0(11))$    & $10^4$ & 600ms & 22s & 20s \\
                                & $10^5$ & 18s   & 34mn & 50mn \\
    \midrule
    $f=(\eta(z)\eta(z^{11}))^2$ & $10^5$ & 159ms &  & \\
                                & $10^7$ & 29s   &  & \\
    \midrule
    $L(E_{11},s)$               & $10^5$ & 106ms & 195ms & 650ms \\
                                & $10^7$ & 9.2s   & 24s   & 78s \\
    \bottomrule
\end{tabular}
    \caption{Timings for alternative methods in level $N=11$}
    \label{tab:compare}
\end{table}

\subsubsection{Generic modular forms expansions}

Pari/GP computes the coefficients from its internal
representation based on the trace formula \cite{BelabasCohen}.
\begin{code}
  \label{code:mfcoefspari}
  \caption{modular form coefficients with Pari}
\begin{alltt}
mf = mfinit([11,2],1);
f=mfeigenbasis(mf)[1];
mfcoefs(f,10^4);
time = 601 ms.
mfcoefs(f,10^5);
time = 18,021 ms.
\end{alltt}
\end{code}

While Magma and Sage use modular symbols.
\begin{code}
  \label{code:mfcoefsmagma}
  \caption{modular form coefficients with Magma}
\begin{alltt}
S2_11 := CuspidalSubspace(ModularForms(Gamma0(11),2));
f := S2_11.1;
time an := qExpansion(f, 10^4);
Time: 20.870 seconds
time an := qExpansion(f, 10^5);
Time: 2969.860 seconds
\end{alltt}
\end{code}

\begin{code}
  \label{code:mfcoefssage}
  \caption{modular form coefficients with Sage}
\begin{alltt}
S2_11 = CuspForms(Gamma0(11));
f = S1_11.0
\%time an = f.q_expansion(10^4);
CPU times: user 22.5 s, sys: 252 ms, total: 22.7 s
\%time an = f.q_expansion(10^5);
CPU times: user 33min 49s, sys: 30.1 s, total: 34min 19s
\end{alltt}
\end{code}

\subsubsection{Other representations}

For this particular newform of level 11 there are efficient alternatives.

For example we can use the eta product expansion
\[
    f(z) = \eta(z)\eta(z^{11}) = \prod_n(1-q^n)^2(1-q^{11n})^2
\]
which gives an algorithm whose complexity is dominated by
a number of FFT multiplication.

\begin{code}
  \label{code:etapari}
  \caption{eta product in Pari/GP}
\begin{alltt}
f(n)=(eta('q+O('q^n))*eta('q^11+O('q^n)))^2;
f(10^5);
time = 159ms.
f(10^7);
time = 29,465 ms.
\end{alltt}
\end{code}

Pari offers this possibility but the generic implementation of
eta products could be improved on this example.
\begin{code}
  \label{code:mfetaquo}
  \caption{modular form as eta product with Pari}
\begin{alltt}
f = mffrometaquo([1,2;11,2]);
mfcoefs(f,10^5);
time = 204 ms.
mfcoefs(f,10^7);
time = 40,851 ms.
\end{alltt}
\end{code}

We can also compute these coefficients via the associated elliptic curve.

\begin{code}
  \label{code:ellanpari}
  \caption{L series of elliptic curve with Pari}
\begin{alltt}
ell = ellinit("11a1");
ellan(ell,10^5);
time = 106 ms.
ellan(ell,10^7);
time = 9,266 ms.
\end{alltt}
\end{code}

\begin{code}
  \label{code:ellansage}
  \caption{L series of elliptic curve with Sage}
\begin{alltt}
e = EllipticCurve("11a");
\% time an = E.anlist(10^5);
CPU times: user 191 ms, sys: 4.22 ms, total: 195 ms
\% time an = E.anlist(10^7);
CPU times: user 23.6 s, sys: 734 ms, total: 24.3 s
\end{alltt}
\end{code}

\begin{code}
  \label{code:ellanmagma}
  \caption{L series of elliptic curve with Magma}
\begin{alltt}
e := LSeries(EllipticCurve("11a1"));
time an := LGetCoefficients(e,10^5);
Time: 0.650 seconds
time an := LGetCoefficients(e,10^7);
Time: 78.210 seconds
\end{alltt}
\end{code}

With this small conductor $N$ our method outperforms
default implementations of point counting:
this is not a surprise when traces $a_p$
are obtained with Shanks-Mestre's baby-step giant-step algorithm
(the method is used for $p<2^{40}$ in Pari/GP) which results
in a $O(n^{5/4})$ complexity for computing $n$ coefficients.
We see a clear difference for lengths $n>10^6$.

Of course this example is very specific:
point counting modulo $p$ is almost insensitive
to the conductor $N$ while our method has $O(N)$ complexity.

\subsection{Coefficients of Eisenstein series}

We demonstrate that it is crucial to adopt
a fast algorithm for the generation of coefficients of Eisenstein series,
as we did with Algorithm \ref{algo:mfeiscoefs}: we see that
even with a very simple quadratic character
$\phi=\chi_{23}(-1,\cdot)$,
current implementations compute the coefficients
of $E_1^{1,\phi}$
in a much slower way than the product $(E_1^{1,\phi})^2$.

In this context a Borisov-Gunnels decomposition
leads to much worse complexity than $O(n\log n)$.

\begin{code}
  \label{code:eiscoefspari}
  \caption{Eisenstein series coefficients with Pari}
\begin{alltt}
e = mfeisenstein(1,Mod(-1,23));
an = mfcoefs(e,10^5);
time = 440 ms.
Ser(an)^2;
time = 36ms.
an = mfcoefs(e,10^6);
time = 5,131 ms.
Ser(an)^2;
time = 246 ms.
\end{alltt}
\end{code}

\begin{code}
  \label{code:eiscoefsmagma}
  \caption{Eisenstein series coefficients with Magma}
\begin{alltt}
M := ModularForms(Gamma1(23),1); M;
E := EisensteinSubspace(M); E;
e := EisensteinSeries(E)[1]; e;
3/2 + q + 2*q^2 + 2*q^3 + 3*q^4 + 4*q^6 + 4*q^8 + 3*q^9 + O(q^12)
time an:= qExpansion(e,10^5);
Time: 47.560 seconds.
time bn:= an^2;
Time: 0.020
\end{alltt}
\end{code}

\begin{code}
  \label{code:eiscoefssage}
  \caption{Eisenstein series coefficients with Sage}
\begin{alltt}
M = ModularForms(Gamma1(23),1,eis_only=True)
E = M.eisenstein_series()
e = E[5]
3/2 + q + 2*q^2 + 2*q^3 + 3*q^4 + O(q^6)
\%time an = e.q_expansion(10^5);
Wall time: 10.2 s
\%time bn = an^2;
Wall time: 12.9 ms
\end{alltt}
\end{code}

\section{Application: triple product L-functions}

Gross and Kudla \cite{GrossKudla1992} investigated for three
weight 2 modular forms $f_1,f_2,f_3$ of the same level $N_0$
the product
\[
  L(f_1\otimes f_2\otimes f_3,s)
  =\prod_p F_p(f_1\otimes f_2\otimes f_3,p^{-s})^{-1}
\]
where for all $p\nmid N_0$ the local factor is the tensor
product of Definition \ref{def:poloperations}.

For $p\mid N_0$ the local factor has to be adjusted with
the following formula
\[
    F_p(f_1\otimes f_2\otimes f_3, T)
    = (1-\alpha T)(1-p\alpha T^2)
    \text{ where }
    \alpha = a_p(f_1)a_p(f_2)a_p(f_3)
\]

The weight is $k=2+2+2 - 2 = 4$, the conductor $N=N_0^5$
and the Gamma factors $\gamma(s) = \Gamma_\C(s)\Gamma_\C(s-1)^3$,
and the completed L-function satisfies
\[
    \Lambda(f_1\otimes f_2\otimes f_3,s)
    = N_0^{\frac 52s}\gamma(s)L(f_1\otimes f_2\otimes f_3,s)
    = \overline{\Lambda}(f_1\otimes f_2\otimes f_3,4-s).
\]

Once all these elements are available, we use our
generic L-function facilities that are
part of Pari/GP since 2015 \cite{parigp:lfun}.
The function \texttt{lfuninit} computes a trigonometric polynomial
$P_\Lambda(x)$ which approximates
the completed L-function to the required precision
near a finite segment of the critical line
\[
    \Lambda(f_1\otimes f_2\otimes f_3, 2+it)
    \approx P_\Lambda(e^{it})
    \text{ for }t\in[-T,T]
\]

For small precision and range $T$ it suffices to compute
an approximation
polynomial $P_\Lambda(x)=\sum_{\ell=-L}^L w_\ell x^\ell$
of small degree $L$, but whose
coefficients are very long smoothed Dirichlet sums of the form
\[
    w_\ell = he^{\frac{k\ell h}{2}}
    \sum_{n\geq 1}
    a_n(f_1\otimes f_2\otimes f_3)
    K_\gamma(\frac{ne^{\ell h}}{\sqrt{N}})
\]
where $K_\gamma(t)$ is the inverse Mellin transform of the
Gamma factor $\gamma(s)$, and $h>0$ is a discretization
parameter.

As an example, let us consider the three newforms
of level $N=35$
\begin{equation}
    \begin{cases}
    f = q + q^3 - 2q^4 - q^5 + q^7 - 2q^9 + O(q^{10})\\
    g = q + (y - 1)q^2 - yq^3 + (-y + 3)q^4 + q^5 - 4q^6 - q^7 + (y - 5)q^8+ O(q^9)\\
    h = q - yq^2 + (y - 1)q^3 + (y + 2)q^4 + q^5 - 4q^6 - q^7 + (-y - 4)q^8 + O(q^9)
    \end{cases}
\end{equation}
where $y=\frac{1+\sqrt{17}}2$ is a root of $y^2-y-4$
($g$ and $h$ are Galois conjugate).

With the notations of Definition \ref{defn:bgdec}, they can
be expressed using Eisenstein series in the following way.
We consider the characters
\[
  \begin{aligned}
    \cC &= (\phi_1,\phi_2,\phi_3,\phi_4,\phi_5=\phi_4^{-1})\\
        &= (\chi_1(1,\cdot), \chi_{35}(6,\cdot),
            \chi_{35}(-1,\cdot), \chi_{35}(8,\cdot), \chi_{35}(29,\cdot)),
  \end{aligned}
\]
the basis of forms
\[
\cF = 
\begin{cases}
    (2,1,1,1,1,1,1) = E_2^{1,1} \\
    (2,1,1,1,1,5,1) = E_2^{1,1}\B_5 \\
    (2,1,1,1,1,7,1) = E_2^{1,1}\B_7 \\
    (2,1,1,1,1,35,1) = E_2^{1,1}\B_{35}\\
    (1,1,2,1,2,1,1) = (E_1^{1,\phi_2})^2 \\
    (1,1,3,1,3,1,1) = (E_1^{1,\phi_3})^2 \\
    (1,1,4,1,5,1,1) = E_1^{1,\phi_4}E_1^{1,\phi_4^{-1}}
\end{cases}
\]
and the coefficients of $f$ and $g$ on $\cF$ are
\[
  \begin{cases}
    B_f = ( -\tfrac72, \tfrac{19}2, \tfrac{13}2,
             \tfrac{47}2, \tfrac98, -\tfrac38, \tfrac{15}4 )  \\
    B_g = ( -\tfrac32y+\tfrac75, \tfrac72y-7,
             \tfrac92y-\tfrac195,\tfrac{11}2y+19,
             \tfrac78y-\tfrac12,-\tfrac58y+\tfrac32,\tfrac54y-3)
  \end{cases}
\]

For an absolute precision of $b=25$ bits and height $t<30$ on the
critical line, Pari expects 30 million Dirichlet coefficients.
\begin{code}
  \label{code:lfuntriple}
  \caption{Complexity estimates for triple product L-function}
\begin{alltt}
Lshape = lfuncreate([n->[],1,[-1,-1,-1,0,0,0,0,1],4,35^5,1]);
localbitprec(25);lfuncost(Lshape,[30])
\% = [30848082, 322]
\end{alltt}
\end{code}

We precompute $40$ million coefficients $a_p(f)$ and $a_p(g)$
with our implementation, we conjugate to get the $a_p(h)$
and run Algorithms \ref{algo:poltensor} and \ref{algo:prodeuler}
to obtain the first $40$ million coefficients
\[
    (a_n(f\otimes g\otimes h))_n
    = (1,0,-4,8,-11,0,15,0,13,0,12,-32,10,\dots)
\]
The whole process takes a few seconds.

Now Pari computes all the weights $w_\ell$ and we obtain an interpolation
polynomial for $\Lambda(f\otimes g\otimes h,2+it)$, $t\in[-30,30]$.

\begin{code}
  \label{code:lfunplot}
    \caption{Graph of the Z-function of $L(f\otimes g\otimes h,s)$}
\begin{alltt}
an = [1,0,-4,8,-11,0,15,0,13,0,12,-32,10,0,44,48,6,0,-16,-88,-60...
Lfgh = lfuncreate([an,0,[-1,-1,-1,0,0,0,0,1],4,35^5,1]);
default(bitprecision,20)
init = lfuninit(Lfgh,[30]);
\end{alltt}
\end{code}

This last computation takes one hour: it is clear
that all the computation time is now spent on
the weights $w_\ell$. Many things can be improved
on this matter, this is another issue we shall
consider later to make this kind of computation
more accessible.

Once the approximation polynomial $P_\Lambda$ is computed
we obtain the plot of the Hardy $Z$-function
of $L(f\otimes g\otimes h,s)$ (Figure \ref{fig:lfunplot}) or the first
zeros on the critical line (Table \ref{tab:lfunzeros}).

\begin{figure}
  \centering
  \includegraphics[width=.8\linewidth]{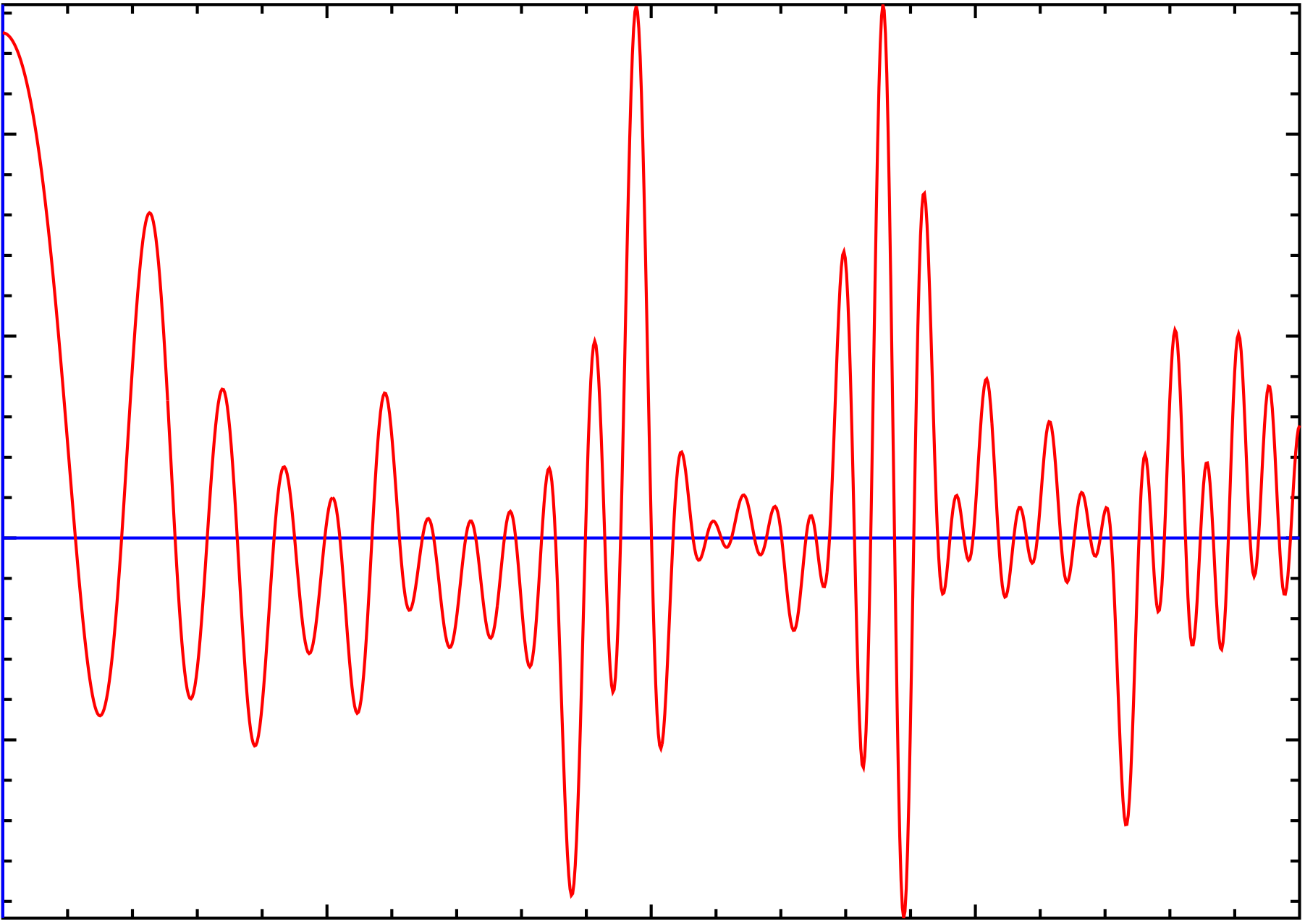}
  \caption{Graph of the $Z$-function of $L(f\otimes g\otimes h,s)$ on $[0,20]$}
  \label{fig:lfunplot}
\end{figure}

\begin{table}
  \centering
  \begin{tabular}{rrrrr}
      \toprule
   1.1202978  &   6.6470076  &   10.645199 & 13.753081 & 16.773863 \\
   1.8330179  &   7.1397898  &   10.851837 & 14.050219 & 16.925990 \\
   2.6575646  &   7.2944945  &   11.080469 & 14.419385 & 17.093806 \\
   3.1533236  &   7.7361099  &   11.239746 & 14.608319 & 17.523444 \\
   3.6159878  &   7.9107288  &   11.592976 & 14.823733 & 17.723815 \\
   4.1842413  &   8.3109855  &   11.777484 & 14.964649 & 17.914793 \\
   4.5009449  &   8.5242661  &   12.012070 & 15.356085 & 18.240817 \\
   4.9662033  &   8.9862883  &   12.387441 & 15.592219 & 18.470106 \\
   5.1993378  &   9.2844258  &   12.543055 & 15.789403 & 18.668992 \\
   5.6944963  &   9.5294005  &   12.751134 & 15.955969 & 18.901905 \\
   6.1177631  &   10.005086  &   13.135366 & 16.316948 & 19.236086 \\
   6.4719956  &   10.337823  &   13.385865 & 16.523806 & 19.367222 \\
 \bottomrule
  \end{tabular}
  \caption{Ordinates of the first non trivial zeros of $L(f\otimes g\otimes h,s)$}
  \label{tab:lfunzeros}
\end{table}

\bibliographystyle{plain} 
\bibliography{./refs}

\end{document}